\newtheorem{lemma}{Lemma}
\newtheorem{proposition}{Proposition}
\begin{document}
\title{Robust Joint Design for Intelligent Reflecting Surfaces Assisted Cell-Free Networks}

\author{Xie Xie, 
        Chen He, 
        Xiaoya Li, 
        Zhu Han,~\IEEEmembership{Fellow,~IEEE,}\\
        and Z. Jane Wang,~\IEEEmembership{Fellow,~IEEE}
\thanks{Xie Xie, Chen He, and Xiaoya Li are with the School of Information Science and Technology, Northwest University, Xi'an, 710069, China. Corresponding author: Chen He (email: chenhe@nwu.edu.cn).}
\thanks{Zhu Han is with the Department of Electrical and
Computer Engineering, University of Houston, TX, USA.}
\thanks{Z. Jane Wang is with the Department of Electrical and Computer Engineering, The University of British Columbia, Vancouver, BC V6T1Z4, Canada.}
}

\markboth{Submitted to IEEE Trans on Wireless Communications}
{Shell \MakeLowercase{\textit{et al.}}: Bare Demo of IEEEtran.cls for IEEE Journals}

\maketitle

\begin{abstract}
Intelligent reflecting surfaces (IRSs) have emerged as a promising economical solution to implement cell-free networks. However, the performance gains achieved by IRSs critically depend on smartly tuned passive beamforming based on the assumption that the accurate channel state information (CSI) knowledge is available, which is practically impossible. Thus, in this paper, we investigate the impact of the CSI uncertainty on IRS-assisted cell-free networks. We adopt a stochastic programming method to cope with the CSI uncertainty by maximizing the expectation of the sum-rate, which guarantees robust performance over the average. Accordingly, an average sum-rate maximization problem is formulated, which is non-convex and arduous to obtain its optimal solution due to the coupled variables and the expectation operation with respect to CSI uncertainties. As a compromising approach, we develop an efficient robust joint design algorithm with low-complexity. Particularly, the original problem is equivalently transformed into a tractable form, and then, the locally optimal solution can be obtained by employing the block coordinate descent method. We further prove that the CSI uncertainty impacts the design of the active transmitting beamforming of APs, but surprisingly does not directly impact the design of the passive reflecting beamforming of IRSs. It is worth noting that the investigated scenario is flexible and general, and thus the proposed algorithm can act as a general framework to solve various sum-rate maximization problems. Simulation results demonstrate that IRSs can achieve considerable data rate improvement for conventional cell-free networks, and confirm the resilience of the proposed algorithm against the CSI uncertainty.
\end{abstract}

\begin{IEEEkeywords}
Intelligent Reflecting Surface, Cell-Free, Constant Modulus Constraint, Robust Design.
\end{IEEEkeywords}

\IEEEpeerreviewmaketitle

\section{Introduction}
In the past decade, the cell-free networks have been proposed as a promising technology to significantly improve the system performance compared with the traditional multi-cell system \cite{5594708}. In particular, cell-free networks advocate a more active treatment of interference, where multiple access points (APs) transmit data to multiple user equipment (UEs) cooperatively and simultaneously \cite{7917284,9354156,7827017}. 
On the other hand, the performance of the cell-free network critically depends on the APs with high hardware-cost and energy-consumption, which become a bottleneck to substantially improve its performance \cite{5594708}.

As a remedy, recently, intelligent reflecting surfaces (IRSs) have attracted a significant amount of attention as a potential economical solution for the future cell-free communication networks \cite{9110915}.
Generally, the basic function of an IRS is to smartly reconfigure the wireless propagation environment by reflecting the incident signal towards the desired spatial direction \cite{9086766}. Since an IRS composes of a large amount of phase shifters (PSs), each of which operates in a passive way without decoding and encoding operations, an IRS consumes much less power than an AP \cite{9140329}. 
Most importantly, by carefully tuning its passive reflecting beamforming, an IRS can achieve a considerable array performance gain.
With these advantages, IRSs have drawn so many interests recently with vast application prospects \cite{8982186,9301375}. The researchers employed IRSs to enhance the performance of communication systems, e.g., improving the data-rate \cite{9394419, 9090356,9110912,9680675}, reducing the transmit power \cite{8930608}, and enhancing energy efficiency \cite{9393607}, etc.   
Among them, one promising application was investigated to combine the IRS and the cell-free network to further enhance the network performance with the low hardware-cost and energy-consumption  \cite{9279253,9352948,9459505}. 
To be specific, the key idea in these works was taking advantage of IRSs to establish a reliable communication link between APs with UEs, so as to economically implement a virtual cell-free network. 
Accordingly, the efficient algorithms were provided to joint design the active transmitting beamforming and the passive reflecting beamforming of the AP and the IRS, respectively. 
Although such works were shown to be effective, they assumed that the high-accuracy full channel state information (CSI) knowledge was perfectly acquired at the APs, which was practically impossible due to the estimation and quantization errors \cite{9180053}. In fact, the efficient schemes have been extensively
studied to estimate CSI knowledge with satisfactory accuracy in IRS-assisted communication systems \cite{9418513,9241029,9130088,9354904}. However, for the IRS-assisted cell-free networks, the estimated CSI was still inaccurate due to that the dimensions of the channels are large, which makes CSI estimation errors are inevitable \cite{9366805}. 

To overcome the impact of the CSI uncertainty on the performance, a few works investigated the robust transmission design for the IRS-assisted communication system that take into account imperfect CSI knowledge. Most works in this area assumed a norm-bounded CSI error model, which is commonly employed when the CSI error is dominated by quantization errors \cite{5982443}. Under this error model, the authors in \cite{9468668,9266086,9133130} proposed the min-max (worst-case) constrained robust design schemes to fight against the CSI uncertainty. 
In general, the channel estimation error is modeled as a random variable, which follows a known probability distribution (e.g., Gaussian distribution) and is unbounded \cite{5982443,6515204,9293148}. 
As a result, a statistical CSI error model was applicable to characterize the practical imperfect CSI mainly due to the channel estimation errors \cite{9180053}.
The robustness under this error model were provided by solving the outage probability constrained problem or using the expected/averaged performance \cite{5982443,9293148,9180053}.
The robust designs for IRS-assisted MISO communication systems subjected to the rate outage probability constraints have been reported in \cite{9180053,9293148,9618858,9632613}.  
The authors in \cite{9117093} and \cite{9316283} proposed a pair of novel robust beamforming design schemes for the IRS-assisted MISO system, to minimize the mean squared error and to maximize the average sum-rate, respectively.
It would like to mention that the above works \cite{9117093,9316283} adopted the stochastic programming method to address the CSI uncertainties by using the expected or the averaged performance, though it does not ensure the robust performance for each individual realization \cite{5982443,6515204}.



Motivated by the discussions as mentioned above, in this paper, we investigate the robust joint design in the IRS-assisted cell-free MIMO communication network, where the statistical CSI error model of all channels are assumed. 
we adopt the stochastic programming method \cite{5982443,6515204} to cope with the CSI uncertainty by maximizing the expectation of the sum-rate, which guarantees robust performance over the average.
Accordingly, our goal is to maximize the average sum-rate by joint designing the active transmitting beamforming and the passive reflecting beamforming of the APs and the IRSs, respectively, while satisfying the power constraint at per AP and the constant modulus constraint at per PS of IRSs.

The main contributions of this paper are summarized as follows
\begin{itemize}
    \item We study the robust joint design of IRS-assisted cell-free MIMO networks in the presence of imperfect CSI knowledge. It is worth pointing out that the considered scenario is general, i.e., multiple APs and UEs with multiple transmitting and receiving antennas, multiple IRSs with multiple PSs, and all the channels are imperfect, and hence, is flexible to change to a specific case. Therefore, the proposed algorithm can act as a general framework to solve various sum-rate maximization problems of the IRS-assisted systems;  
    \item The formulated problem is challenging to solve due to the non-convex objective function, the intricately coupled variables, the constant modulus constraints, and the expectation operator about the uncertainty terms. To this end, we present an efficient robust joint design algorithm to handle it. Particularly, as a compromising approach, we first transform the original non-convex problem to an equivalent but tractable form by extending the fractional programming method, e.g., quadratic transform (QT) and  Lagrangian dual transform (LDT) into matrix-forms. Then, the transformed problem is  decomposed into two subproblems, i.e., the active transmit beamforming matrices optimization and the passive reflecting beamforming matrices optimization. Next, the block coordinate descent (BCD) -based method is adopted to solve the two subproblems, where both the locally optimal solutions of active and passive beamforming matrices in nearly closed-forms can be alternately optimized;
    \item Moreover, we further prove that the expectation value with respect to the CSI uncertainty surprisingly does not depend on phase-shifting of the passive reflecting beamforming of IRSs, but depends on the reflecting efficiency of IRSs, the total number of PSs at IRSs, and the active transmitting beamforming of APs. Consequently, the CSI uncertainty impacts the optimization of the active transmitting beamforming of APs, but surprisingly does not directly impact the 
   optimization of the passive reflecting beamforming of IRSs;
    \item Simulation results demonstrate that IRSs can significantly improve the rate compared with conventional cell-free networks, and confirm that the proposed robust joint design algorithm has the resilience against the imperfect CSI knowledge. 
\end{itemize}

\textit{Organizations:} The rest of this paper is organized as follows. In Section \ref{sec2}, we describe the IRS-assisted cell-free network model and formulate the average sum-rate maximization problem, while in Section \ref{sec3}, we propose an efficient robust joint design algorithm to solve it. Numerical simulation results are presented in Section \ref{sec4} and the paper is concluded in Section \ref{sec5}. 

\textit{Notations:} Vectors and matrices are presented by bold-face lower-case and upper-case letters, respectively. $\mathcal C\mathcal N\left(\mathbf 0, \mathbf I\right)$ denotes the circularly symmetric complex Gaussian (CSCG) distribution with zero mean and covariance matrix $\mathbf I_{\mathcal N}$, where $\mathbf I_{\mathcal N}$ denotes an $\mathcal N\times \mathcal N$ identity matrix. $\mathbb E\left\{ \cdot \right\}$ denotes the expectation operator. $\operatorname{Vec}\left\{ \cdot \right\}$ stacks all the columns of the argument into a single column vector and $\operatorname{Vecd}\left\{ \cdot \right\}$ forms a vector out
of the diagonal of its matrix argument. $\otimes$ and $ \odot $ denote the Kronecker and the Hadamard products, respectively. $\mathbf A^{\operatorname{H}}$ and $\operatorname{Tr}\left(\mathbf A\right)$ denote the Hermitian and trace operators of matrix $\mathbf A$, respectively. $\operatorname{Re}\left\{a\right\}$ is the real part of $a$.

\section{Network Model and Problem Formulation}{\label{sec2}}
\begin{figure}
    \centering
    \includegraphics[width=0.65\linewidth]{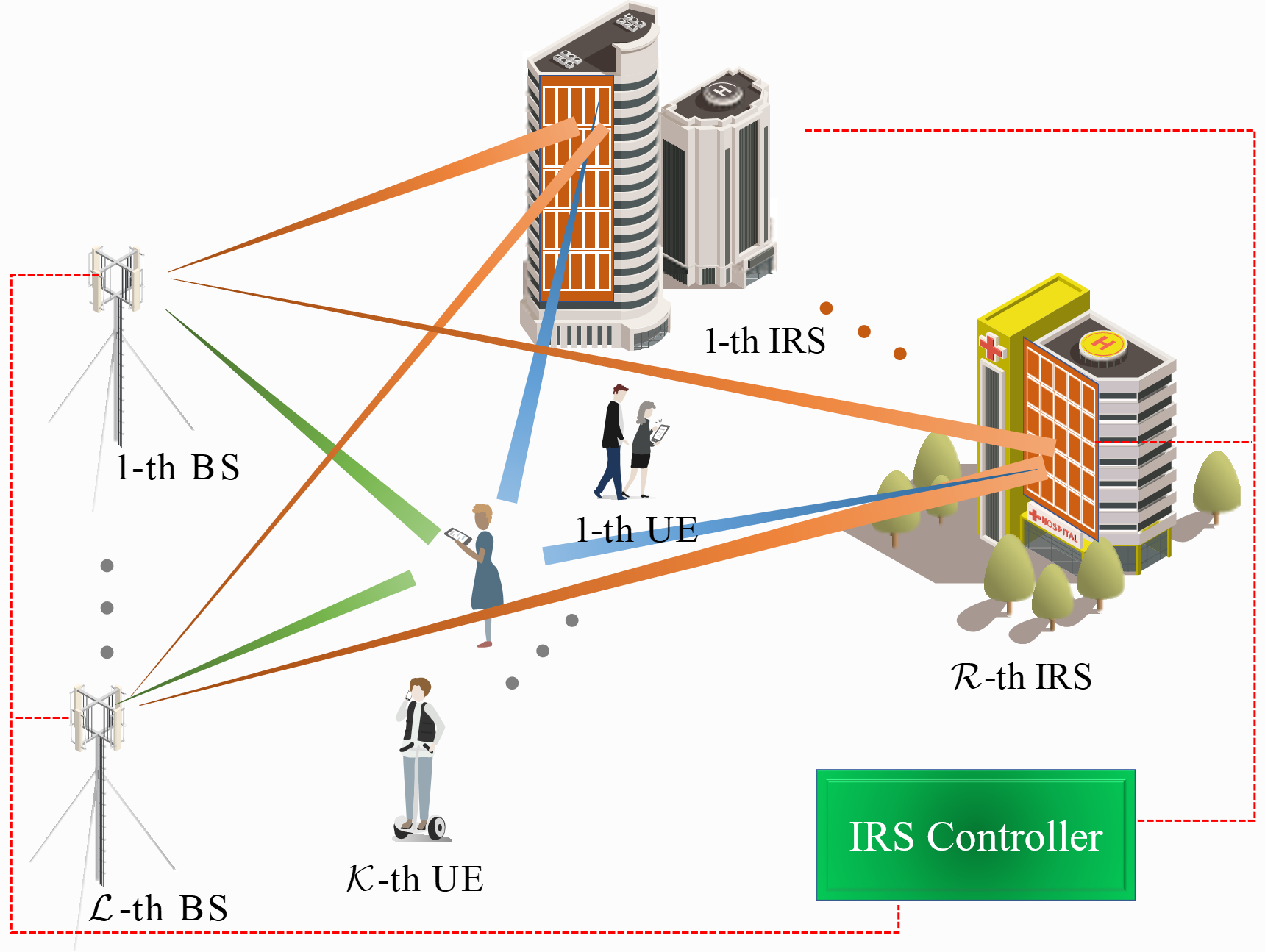}
    \caption{A system model schematic of IRSs-assisted cell-free MIMO networks.}
 \label{f1} 
\end{figure}
As shown in Fig. \ref{f1}, we describe a distributed multiple IRSs assisted cell-free MIMO downlink communication system \cite{9427474}, where $\mathcal L$ APs and $\mathcal R$ IRSs serve $\mathcal K$ UEs cooperatively. The IRSs are attached with a controller, which is responsible for tuning the phase-shifting of PSs according to the carefully optimized passive reflecting beamforming matrices. Moreover, each AP, IRS, and UE are equipped with $\mathcal M_B$ transmitting antennas, $\mathcal N$ PSs, and $\mathcal M_U$ receiving antennas, respectively. The signals reflected by the IRS two or more times are ignored \cite{8930608}.

\subsection{Channel Model}
We assume a narrow-band system and let ${\mathbf D_{l,k} \in {\mathbb{C}^{\mathcal M_B \times \mathcal M_U}}}$ denotes the direct channel from the $l$-th AP to the $k$-th UE, and ${\mathbf G_{r,k} \in \mathbb{C}^{\mathcal N \times \mathcal M_U}}$ is the channel from the $r$-th IRS to the $k$-th UE. The channel from the $l$-th AP to the $r$-th IRS is represented by ${\mathbf S_{l,r} \in {\mathbb{C}^{\mathcal N \times \mathcal M_B}}}$. 
We assume the small-scale fading of the direct channels (i.e., $\mathbf D_{l,k},\forall {l,k}$) and the IRS-related channels (i.e., $\mathbf G_{r,k},\forall {r,k}$ and $\mathbf S_{l,r},\forall {l,r}$) are Rayleigh fading and Rician fading \cite{9090356}, respectively.
Besides, we further assume that the APs and the UEs are equipped with uniform linear arrays (ULAs), and the IRSs are modeled as uniform planar arrays (UPAs) \cite{9133435,9423652,9279253}.

With the IRSs, the transmitted signal can be dynamically altered by its $\mathcal N$ PSs, where the diagonal passive reflecting beamforming matrices are denoted as
\begin{align}\label{equ1}
    \boldsymbol \Theta_r=\alpha\operatorname{diag}\left(e^{j{\phi_{r,1}}},e^{j{\phi_{r,2}}},\cdots,e^{j{\phi_{r,\mathcal N}}}\right)\in{\mathbb{C}^{\mathcal N \times \mathcal N}},\forall r \in \mathcal R,
\end{align}
where $\alpha \in \left[ {0,1} \right]$ denotes the reflecting efficiency of IRSs, which is used to measure the power loss caused by the signal absorption of the IRS, and $\phi_{r,n} \in \left[0,2\pi\right)$ is the phase-shifting of the $n$-th PS of the $r$-th IRS.

In general, to acquire the CSI knowledge with satisfactory accuracy is challenging to realize in IRS-assisted cell-free networks, and thus, to account for the estimation error of the acquired CSI knowledge, we consider CSI knowledge of both the direct and the IRS-related channels uncertain. Let $\mathbf H_{l,k} \in \mathbb{C}^{\mathcal M_B \times \mathcal M_U}$ denotes the equivalent actual channel spanning from the $l$-th AP to the $k$-th UE, which can be modeled as
\begin{align}\label{equ2}
    \mathbf { H}_{l,k}^{\operatorname{H}}=\mathbf {\hat D}_{l,k}^{\operatorname{H}}+\mathbf { \bar D}_{l,k}^{\operatorname{H}}+\sum_{r=1}^{\mathcal R}{\left(\mathbf {\hat G}_{r,k}^{\operatorname{H}}+\mathbf { \bar G}_{r,k}^{\operatorname{H}}\right)\boldsymbol{\Theta}_r\left(\mathbf {\hat S}_{l,r}+\mathbf { \bar S}_{l,r}\right)}, \forall l,k,
\end{align}
where $\mathbf  {\hat D}_{l,k}$, $\mathbf  {\hat G}_{r,k}$, and $\mathbf  {\hat S}_{l,r}$ are estimated CSI by using efficient estimation schemes for IRS-assisted systems \cite{9326394,9241029,9130088,9354904}, and $\mathbf {\bar D}_{l,k}$, $\mathbf {\bar G}_{r,k}$, and $\mathbf {\bar S}_{l,r}$ are the additive CSI estimation errors. 
As \cite{9180053,9293148,9618858,9632613,9117093,9316283 }, we assume the statistical CSI error model, which is applicable when the error is predominantly due to unavoidable inaccurate channel estimation in practical scenarios \cite{9293148}. Consequently, the CSI error matrices $\mathbf {\bar D}_{l,k}$, $\mathbf {\bar G}_{r,k}$, and $\mathbf {\bar S}_{l,r}$ are assumed to follow the Gaussian distribution \cite{9180053,5982443} with zero mean and 
\begin{subequations}
\begin{align}\label{equ3}
    \mathbb{E}\left\{\operatorname{Vec}\left(\mathbf {\bar D}_{l,k}\right)\operatorname{Vec}\left(\mathbf {\bar D}_{l,k}\right)^{\operatorname{H}}\right\}&=\delta_{\mathbf D_{l,k}}^2\mathbf I_{\mathcal M_B\mathcal M_U}, \forall l,k,\\
    \mathbb{E}\left\{\operatorname{Vec}\left(\mathbf {\bar G}_{r,k}\right)\operatorname{Vec}\left(\mathbf {\bar G}_{r,k}\right)^{\operatorname{H}}\right\}&=\delta_{\mathbf G_{r,k}}^2\mathbf I_{\mathcal N\mathcal M_U}, \forall r,k,\\
    \mathbb{E}\left\{\operatorname{Vec}\left(\mathbf {\bar S}_{l,r}\right)\operatorname{Vec}\left(\mathbf {\bar S}_{l,r}\right)^{\operatorname{H}}\right\}&=\delta_{\mathbf S_{l,r}}^2\mathbf I_{\mathcal N\mathcal M_B}, \forall l,r,
\end{align}
\end{subequations}
where $\delta_{\mathbf D_{l,k}}^2=\kappa_{\mathbf D}^2\left\| \operatorname{Vec}\left(\mathbf {\hat D}_{l,k}\right)\right\|_{2}^2$, $\delta_{\mathbf G_{r,k}}^2=\kappa_{\mathbf G}^2\left\|\operatorname{Vec}\left(\mathbf {\hat G}_{r,k}\right)\right\|_{2}^2$, and $\delta_{\mathbf S_{l,r}}^2=\kappa_{\mathbf S}^2\left\|\operatorname{Vec}\left(\mathbf {\hat S}_{l,r}\right)\right\|_{2}^2$, with $\kappa_{ \mathbf D /\mathbf G/\mathbf S }^2\in \left[0,1\right)$ are normalized CSI errors, which are used to measure the relative amount of the CSI uncertainties.
Further, all the CSI error covariances are assumed to be known at the APs \cite{5982443}. 

By defining $\boldsymbol\Theta= \operatorname{blkdiag}\left\{\boldsymbol\Theta_1,\boldsymbol\Theta_2,\cdots,\boldsymbol\Theta_{\mathcal R}\right\}\in \mathbb C^{\mathcal R\mathcal N\times \mathcal R\mathcal N}$, $\mathbf G_{k}=\left[\mathbf G_{1,k}^{\operatorname{T}},\mathbf G_{2,k}^{\operatorname{T}},\cdots,\mathbf G_{\mathcal R,k}^{\operatorname{T}}\right]^{\operatorname{T}}\in \mathbb C^{\mathcal R\mathcal N\times \mathcal M_U}$, $\mathbf S_{l}=\left[\mathbf S_{l,1}^{\operatorname{T}},\mathbf S_{l,2}^{\operatorname{T}},\cdots,\mathbf S_{l,\mathcal R}^{\operatorname{T}}\right]^{\operatorname{T}}\in \mathbb C^{\mathcal R\mathcal N\times \mathcal M_B}$, we have 
\begin{align}\label{equ4}
    \mathbf { H}_{l,k}^{\operatorname{H}}\triangleq\mathbf {\hat H}_{l,k}^{\operatorname{H}}+\mathbf {\bar H}_{l,k}^{\operatorname{H}}, \forall l,k,
\end{align}
where $\mathbf {\hat H}_{l,k}^{\operatorname{H}}\triangleq\mathbf {\hat D}_{l,k}^{\operatorname{H}}+\mathbf {\hat G}_{k}^{\operatorname{H}}\boldsymbol{\Theta}\mathbf {\hat S}_{l}^{\operatorname{H}}$ and $\mathbf {\bar  H}_{l,k}^{\operatorname{H}}\triangleq\mathbf { \bar D}_{l,k}^{\operatorname{H}}+\mathbf {\bar  G}_{k}^{\operatorname{H}}\boldsymbol{\Theta}\mathbf {\hat S}_{l}^{\operatorname{H}}+\mathbf {\hat G}_{k}^{\operatorname{H}}\boldsymbol{\Theta}\mathbf {\bar S}_{l}^{\operatorname{H}}+\mathbf { \bar G}_{k}^{\operatorname{H}}\boldsymbol{\Theta}\mathbf {\bar S}_{l}^{\operatorname{H}}$.

\subsection{Network Model}

The signal transmitted from the $\mathcal L$ APs can be mathematically expressed as
\begin{align}\label{equ5}
\mathbf{x} = \sum_{l=1}^{\mathcal L}{\sum_{k = 1}^{\mathcal K} \mathbf{W}_{l,k}\mathbf{s}_{l,k}}, \forall l \in \mathcal L,\forall k \in \mathcal K.
\end{align}
where ${\mathbf s_{l,k}} \in {\mathbb{C}^{d \times 1}}$ denotes $d$ desired data streams from the $l$-th AP intend for the $k$-th UE and satisfies ${\mathbf s_{l,k}} \sim\mathcal C\mathcal N\left(\mathbf 0, \mathbf I_{d}\right)$. 
$\mathbf W _{l,k} \in {\mathbb{C}^{\mathcal M_B \times d}}$ is the corresponding active transmitting beamforming matrix from the $l$-th AP to the $k$-th UE, the transmit power at each AP is $\sum_{k=1}^{\mathcal K}{\mathbb E\left\{\mathbf s_{l,k}^{\operatorname{H}}\mathbf W_{l,k}^{\operatorname{H}}\mathbf W_{l,k}\mathbf s_{l,k}\right\}}=\sum_{k=1}^{\mathcal K}{\left\|\mathbf W_{l,k}\right\|_{\operatorname{F}}^2}$.

The received signal at the $k$-th UE can be expressed by
\begin{align}\label{equ6}
{{\mathbf y_{k}}} = \sum_{l=1}^{\mathcal L}{\mathbf {\hat H} _{l,k}^{\operatorname{H}} \mathbf W _{l,k} \mathbf s_{l,k}}+\sum_{l=1}^{\mathcal L}\sum_{i=1, i\ne k}^{\mathcal K}{\mathbf {\hat H} _{l,k}^{\operatorname{H}} \mathbf W _{l,i} \mathbf s_{l,i}}+\sum_{l=1}^{\mathcal L}\sum_{i=1}^{\mathcal K}{\mathbf {\bar H} _{l,k}^{\operatorname{H}} \mathbf W _{l,i} \mathbf s_{l,i}}+\mathbf n_k,
\end{align}
where $\mathbf n_{k} \sim  \mathcal{C}\mathcal{N}\left( {\mathbf 0,{\sigma_k ^2\mathbf {I}_{\mathcal M_U}}} \right)$ is the background additive white Gaussian noise (AWGN) vector at the $k$-th UE.

Based on the aforementioned discussions, the signal to interference plus noise ratio (SINR) matrix at the $k$-th UE can be formulated as 
\begin{align}\label{equ7}
    {\boldsymbol{\Gamma} _{k}} =\sum_{l=1}^{\mathcal L}{ \mathbf {\hat H}_{l,k}^{\operatorname{H}}{\mathbf W_{l,k}}\mathbf V_k^{\operatorname{-1}}\mathbf W_{l,k}^{\operatorname{H}}{\mathbf {\hat H}_{l,k}}},\forall k \in \mathcal K,
\end{align}
where $\mathbf V_k$ denotes the covariance of the effective interference plus noise and is calculated as
\begin{align}\label{equ8}
  \mathbf V_{k} = &\sum_{l = 1}^{\mathcal L}\sum_{i=1, i\ne k}^{\mathcal K}{\mathbf {\hat H}_{l,k}^{\operatorname{H}}{\mathbf W_{l,i}}\mathbf W_{l,i}^{\operatorname{H}}{\mathbf {\hat H}_{l,k}}}  + \sum_{l = 1}^{\mathcal L}\sum_{i=1}^{\mathcal K}{\mathbf {\bar H}_{l,k}^{\operatorname{H}}{\mathbf W_{l,i}}\mathbf W_{l,i}^{\operatorname{H}}{\mathbf {\bar H}_{l,k}}}+ {\sigma_k ^2}{\mathbf {I}_{{\mathcal M_U}}}. 
\end{align}

We adopt a stochastic programming method to address the CSI uncertainties by maximizing the expectation of the sum-rate that depend on the CSI error. Such a stochastic programming method guarantees robust performance over the average, though it does not ensure robust performance for each individual realization \cite{5982443}. By collecting all active transmit beamforming matrices at $\mathcal L$ APs as $\mathbf {\tilde W}=\left\{\mathbf W_{l,k},\forall {l,k}\right\}$ and passive reflecting beamforming matrices at $\mathcal R$ IRSs as $\boldsymbol{\tilde\Theta}=\left\{\boldsymbol\Theta_r,\forall r\right\}$, the average sum-rate of the IRS-assisted cell-free network is given by
\begin{align}
    \mathcal{R}\left( {\mathbf {\tilde W},\boldsymbol{\tilde\Theta}} \right) =\mathbb E\left\{ {\sum_{k = 1}^{\mathcal K} {\log \left| {\mathbf I_{{\mathcal M_U}} + {\boldsymbol\Gamma _{k}}} \right|} }\right\},
\end{align}
where the expectation is taken over all uncertain terms, i.e., the unknown CSI estimation error matrices $\mathbf {\bar D}_{l,k}$, $\mathbf {\bar G}_{r,k}$, and $\mathbf {\bar S}_{l,r}, \forall l,r,k$, which is main challenge to solve this problem. 

\subsection{Problem Formulation}

In this paper, our objective is maximizing the average sum-rate through joint optimizing the robust active transmit beamforming $\mathbf {\tilde W}$ and the robust passive reflecting beamforming $\boldsymbol {\tilde \Theta}$ with the imperfect CSI knowledge, while satisfying the power constraint of per AP and the constant modulus constraint of per PS. Accordingly, the average sum-rate of the IRS-assisted cell-free network maximization problem can be mathematically formulated as
\begin{subequations}\label{equ10}
\begin{align}
\mathcal P_1:\;{ \mathop {\max }_{\mathbf {\tilde W} ,\boldsymbol{\tilde\Theta} } } \quad & {{\mathcal{R}}\left( {\mathbf {\tilde W},\boldsymbol{\tilde\Theta}  } \right)} \label{equ10a}\\
{\operatorname{s.t.} \quad} & {\sum_{k = 1}^{\mathcal K} {{{\left\| {{\mathbf W _{l,k}}} \right\|}_{\operatorname{F}} ^2}}  \le {P_{l}^{\max}},\forall\, l \in \mathcal  L}\label{equ10b},\\
&  \left  |\boldsymbol\Theta_{n,n} \right |={\alpha},\forall n \in\mathcal R\mathcal N, \label{equ10c}
\end{align} 
\end{subequations}
where the constraint \eqref{equ10b} limits the maximum transmit power of each AP and \eqref{equ10c} represents the constant modulus constraint of each PS at the IRSs. Due to the optimizing variables are intricately coupled in the matrix-ratio terms, problem $\mathcal P_1$ is non-convex and arduous to tackle by employing the existing methods directly. Besides, the expectation operator with respect to uncertain terms, i.e., CSI estimation errors, makes the formulated problem becomes rather challenging. Therefore, in the following section, we provide a robust joint design algorithm to solve the above non-convex problem.

\section{Proposed Robust Joint Design Algorithm}{\label{sec3}}
In this section, we describe an efficient algorithm to robust joint design the active transmitting beamforming matrices of APs and the passive reflecting beamforming matrices of IRSs in the presence of imperfect CSI knowledge.
\subsection{Overview of the Proposed Algorithm}
Under the statistical CSI error model, we adopt a stochastic programming method to guarantee the robust performance and formulate problem $\mathcal P_1$. Particularly, we propose the robust joint design algorithm to tackle this problem, and locally optimal solutions of the active transmitting beamforming of APs and the passive reflecting beamforming of IRSs can be obtained.

First, as a compromising approach, we consider to transform intractable problem $\mathcal P_1$ into an equivalent form by extending the well-known fractional programming (FP) methods, e.g., quadratic transform (QT) 
\cite[Corollary 1]{shen2018fractional2} and Lagrangian dual transform (LDT) \cite[Theorem 4]{shen2018fractional2} into matrix-forms. Accordingly, we have the following proposition:

\begin{proposition}\label{p2}
By introducing a pair of auxiliary matrices $\mathbf {\tilde U}=\left\{\mathbf U_k\in \mathbb{C}^{d \times d}, \forall k\right\}$  and $\mathbf {\tilde Y}=\left\{\mathbf Y_k\in \mathbb{C}^{\mathcal M_U \times d}, \forall k\right\}$, the formulated problem $\mathcal P_1$ can be equivalently transformed as
\begin{subequations}
\begin{align}\label{equ11}
   \mathcal P_2:\; \mathop{\max}_{\mathbf {\tilde W}, \boldsymbol{\tilde\Theta}, \mathbf {\tilde U}, \mathbf {\tilde Y}} \quad & f_1\left(\mathbf {\tilde W}, \boldsymbol{\tilde\Theta}, \mathbf {\tilde U}, \mathbf {\tilde Y}\right)\\
    {\operatorname{s.t.}}\quad &\eqref{equ10b} \,\operatorname{and}\,\eqref{equ10c},
\end{align}
\end{subequations}
where the objective function is given by
\begin{align}\label{equ12}
    &f_1\left(\mathbf {\tilde W}, \boldsymbol{\tilde\Theta}, \mathbf {\tilde U}, \mathbf {\tilde Y}\right)=\sum_{k=1}^{\mathcal K}{\log\left|\mathbf {\bar U}_k\right|}-\sum_{k=1}^{\mathcal K}{\operatorname{Tr}\left(\mathbf U_k\right)}+\sum_{k=1}^{\mathcal K}{\operatorname{Tr}\left(\mathbf {\bar U}_k \mathbf Y_k^{\operatorname{H}}\sum_{l=1}^{\mathcal L}{\mathbf {\hat H}_{l,k}^{\operatorname{H}}\mathbf W_{l,k}}\right)}\notag\\
    &+\sum_{k=1}^{\mathcal K}{\operatorname{Tr}\left(\mathbf {\bar U}_k \sum_{l=1}^{\mathcal L}{\mathbf W_{l,k}^{\operatorname{H}}\mathbf {\hat H}_{l,k}}\mathbf Y_k\right)}-\sum_{k=1}^{\mathcal K}{\operatorname{Tr}\left({\mathbf{\bar U}_k}\mathbf Y_k^{\operatorname{H}}\sum_{l=1}^{\mathcal L}\sum_{i=1}^{\mathcal K}{\mathbf {\hat H}_{l,k}^{\operatorname{H}}\mathbf W_{l,i}\mathbf W_{l,i}^{\operatorname{H}}\mathbf {\hat H}_{l,k}}\mathbf Y_{k}\right)}\notag\\
    &-\mathbb E\left\{\sum_{k=1}^{\mathcal K}{\operatorname{Tr}\left({\mathbf{\bar U}_k}\mathbf Y_k^{\operatorname{H}}\sum_{l=1}^{\mathcal L}\sum_{i=1}^{\mathcal K}{\mathbf {\bar H}_{l,k}^{\operatorname{H}}\mathbf W_{l,i}\mathbf W_{l,i}^{\operatorname{H}}\mathbf {\bar H}_{l,k}}\mathbf Y_{k}\right)}\right\}
    -\sum_{k=1}^{\mathcal K}{\operatorname{Tr}\left(\sigma_k^2{\mathbf{\bar U}_k}{\mathbf Y_k^{\operatorname{H}}}\mathbf Y_{k}\right)}.
\end{align}
where $\mathbf {\bar U}_k \triangleq {\mathbf {U} _{k}}+\mathbf I_{\mathcal M_u},\forall k \in {\mathcal K}$.
\end{proposition}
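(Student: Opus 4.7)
The plan is to apply the matrix-form Lagrangian dual transform (LDT) and then the matrix-form quadratic transform (QT) to each log-determinant in the sum-rate, treating the outer expectation as a linear operator that commutes with these algebraic substitutions. Since both LDT and QT are lossless (the original expression is recovered when the introduced auxiliaries are optimized in closed form), the transformed problem $\mathcal P_2$ will share the same $(\tilde{\mathbf W}, \tilde{\boldsymbol\Theta})$ optima as $\mathcal P_1$, with the expectation surviving only on the single $\bar{\mathbf H}$-dependent quadratic term of \eqref{equ12}.

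First I would invoke the matrix LDT of \cite[Theorem 4]{shen2018fractional2} on each summand $\log|\mathbf I_{\mathcal M_U} + \boldsymbol\Gamma_k|$. Introducing the dual matrix $\mathbf U_k \succeq \mathbf 0$ and writing $\bar{\mathbf U}_k = \mathbf I + \mathbf U_k$, this yields
\begin{align*}
\log|\mathbf I + \boldsymbol\Gamma_k| = \max_{\mathbf U_k}\Big\{\log|\bar{\mathbf U}_k| - \operatorname{Tr}(\mathbf U_k) + \operatorname{Tr}\bigl(\bar{\mathbf U}_k\,\boldsymbol\Omega_k\bigr)\Big\},
\end{align*}
where $\boldsymbol\Omega_k$ is the residual matrix ratio obtained by pulling $\mathbf V_k^{-1}$ through a Woodbury-type identity into the effective received covariance. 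Next I would apply the matrix QT of \cite[Corollary 1]{shen2018fractional2} to $\operatorname{Tr}(\bar{\mathbf U}_k \boldsymbol\Omega_k)$, introducing $\mathbf Y_k \in \mathbb C^{\mathcal M_U\times d}$. The QT splits this trace into a Hermitian-conjugate pair of linear terms---reproducing exactly the third and fourth terms of \eqref{equ12}---together with a negative quadratic penalty $-\operatorname{Tr}\bigl(\bar{\mathbf U}_k \mathbf Y_k^{\operatorname H}(\text{signal}+\mathbf V_k)\mathbf Y_k\bigr)$. Substituting the decomposition of $\mathbf V_k$ from \eqref{equ8} and absorbing the signal ($i=k$) block into the interference sum then separates this penalty into the three desired tail terms: the deterministic $\hat{\mathbf H}$-based aggregate $\sum_{l,i}\hat{\mathbf H}_{l,k}^{\operatorname H}\mathbf W_{l,i}\mathbf W_{l,i}^{\operatorname H}\hat{\mathbf H}_{l,k}$ with $i$ now ranging over all users, the $\bar{\mathbf H}$-based CSI-error contribution, and the noise term $\sigma_k^2\mathbf Y_k^{\operatorname H}\mathbf Y_k$.

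The delicate step is reconciling the outer expectation with the auxiliary maximization. I would argue that under the isotropic Gaussian error model \eqref{equ3}, the $\bar{\mathbf H}$-quadratic contribution to $\mathbf V_k$ is already a deterministic functional of $(\tilde{\mathbf W}, \tilde{\boldsymbol\Theta})$ once the expectation is taken, so the LDT/QT are applied to what is effectively a deterministic log-det; by linearity the $\mathbb E\{\cdot\}$ then pushes through the expansion and attaches itself to precisely the one quadratic in \eqref{equ12} that still carries $\bar{\mathbf H}_{l,k}$. The equivalence $\mathcal P_1 \equiv \mathcal P_2$ is completed by the usual closed-form inner maxima: setting $\partial f_1/\partial \mathbf U_k = \mathbf 0$ recovers $\mathbf U_k^\star = \boldsymbol\Gamma_k$ and $\partial f_1/\partial \mathbf Y_k = \mathbf 0$ recovers the MMSE-type receiver $\mathbf Y_k^\star$; back-substitution reconstructs $\sum_k \log|\mathbf I + \boldsymbol\Gamma_k|$ term by term. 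The principal obstacle is the bookkeeping of this back-substitution---especially verifying that the QT quadratic penalty correctly reassembles $(\text{signal}+\mathbf V_k)$ and that the expected value of $\bar{\mathbf H}_{l,k}^{\operatorname H}\mathbf W\mathbf W^{\operatorname H}\bar{\mathbf H}_{l,k}$ matches the isotropic covariance structure of \eqref{equ3}---but this is a matter of careful algebra rather than any new idea.
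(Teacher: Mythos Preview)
Your proposal is correct and follows essentially the same approach as the paper: the paper itself does not give a detailed argument but simply states that the proposition is obtained by extending the quadratic transform and Lagrangian dual transform of \cite{shen2018fractional,shen2018fractional2} from vector to matrix form, which is precisely the two-step LDT-then-QT derivation you outline. Your write-up in fact supplies more detail than the paper (including the handling of the outer expectation and the back-substitution bookkeeping), but the underlying mechanism is identical.
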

\begin{proof}
This proposition is extending QT and FPT methods from vector-forms into matrix-forms, and the proof can follows the result in \cite{shen2018fractional,shen2018fractional2}, and thus, is
omitted here for brevity.
\end{proof}

Although we have introduced two additional optimization variables, problem $\mathcal P_2$ has been significantly simplified.
However, problem $\mathcal P_2$ is still hard to solve due to the presence of the expectation operation with respect to the CSI uncertainties $\mathbf {\bar D}_{l,k}, \forall l,k$, $\mathbf {\bar G}_{r,k}, \forall r,k$, and $\mathbf {\bar S}_{l,r}, \forall l,r$. To this end, we first investigate the expectation term and have the following proposition
\begin{proposition}\label{proepec}
The expectation term $\mathbb E\left\{\sum_{k=1}^{\mathcal K}{\operatorname{Tr}\left({\mathbf{\bar U}_k}\mathbf Y_k^{\operatorname{H}}\sum_{l=1}^{\mathcal L}\sum_{i=1}^{\mathcal K}{\mathbf {\bar H}_{l,k}^{\operatorname{H}}\mathbf W_{l,i}\mathbf W_{l,i}^{\operatorname{H}}\mathbf {\bar H}_{l,k}}\mathbf Y_{k}\right)}\right\}$ in \eqref{equ12} can be expressed by
\begin{align}\label{equ13}
&\mathbb E\left\{\sum_{k=1}^{\mathcal K}{\operatorname{Tr}\left({\mathbf{\bar U}_k}\mathbf Y_k^{\operatorname{H}}\sum_{l=1}^{\mathcal L}\sum_{i=1}^{\mathcal K}{\mathbf {\bar H}_{l,k}^{\operatorname{H}}\mathbf W_{l,i}\mathbf W_{l,i}^{\operatorname{H}}\mathbf {\bar H}_{l,k}}\mathbf Y_{k}\right)}\right\}\notag\\
    =&\sum_{k=1}^{\mathcal K}{\sum_{l=1}^{\mathcal L}{\sum_{i=1}^{\mathcal K}{\delta_{\mathbf D_{l,k}}^2\operatorname{Tr}\left(\mathbf Y_k{\mathbf{\bar U}_k}\mathbf Y_k^{\operatorname{H}}\right)\operatorname{Tr}\left(\mathbf W_{l,i}\mathbf W_{l,i}^{\operatorname{H}}\right)}}}\notag\\
    &+\sum_{k=1}^{\mathcal K}{\sum_{l=1}^{\mathcal L}{\sum_{i=1}^{\mathcal K}{\alpha^2\delta_{\mathbf G_{k}}^2\operatorname{Tr}\left(\mathbf Y_k{\mathbf{\bar U}_k}\mathbf Y_k^{\operatorname{H}}\right)\operatorname{Tr}\left(\mathbf {\hat S}_l\mathbf W_{l,i}\mathbf W_{l,i}^{\operatorname{H}}\mathbf {\hat S}_l^{\operatorname{H}}\right)}}}\notag\\
    &+\sum_{k=1}^{\mathcal K}{\sum_{l=1}^{\mathcal L}{\sum_{i=1}^{\mathcal K}{\alpha^2\delta_{\mathbf S_{l}}^2\operatorname{Tr}\left(\mathbf {\hat G}_k\mathbf Y_k{\mathbf{\bar U}_k}\mathbf Y_k^{\operatorname{H}}\mathbf {\hat G}_k^{\operatorname{H}}\right)\operatorname{Tr}\left(\mathbf W_{l,i}\mathbf W_{l,i}^{\operatorname{H}}\right)}}}\notag\\
    &+\sum_{k=1}^{\mathcal K}{\sum_{l=1}^{\mathcal L}{\sum_{i=1}^{\mathcal K}{\mathcal R\mathcal N\alpha^2\delta_{\mathbf S_{l}}^2\delta_{\mathbf G_{k}}^2\operatorname{Tr}\left(\mathbf Y_k{\mathbf{\bar U}_k}\mathbf Y_k^{\operatorname{H}}\right)\operatorname{Tr}\left(\mathbf W_{l,i}\mathbf W_{l,i}^{\operatorname{H}}\right)}}}.
\end{align}
\end{proposition}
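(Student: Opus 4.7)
The plan is to first expand $\mathbf{\bar H}_{l,k}^{\operatorname{H}}$ using \eqref{equ4} into four additive pieces: the pure direct-channel error $\mathbf{\bar D}_{l,k}^{\operatorname{H}}$, the two singly-random sandwich pieces $\mathbf{\bar G}_{k}^{\operatorname{H}}\boldsymbol\Theta\mathbf{\hat S}_l^{\operatorname{H}}$ and $\mathbf{\hat G}_{k}^{\operatorname{H}}\boldsymbol\Theta\mathbf{\bar S}_l^{\operatorname{H}}$, and the doubly-random piece $\mathbf{\bar G}_{k}^{\operatorname{H}}\boldsymbol\Theta\mathbf{\bar S}_l^{\operatorname{H}}$. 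Substituting this decomposition into the quadratic form $\mathbf{\bar H}_{l,k}^{\operatorname{H}}\mathbf W_{l,i}\mathbf W_{l,i}^{\operatorname{H}}\mathbf{\bar H}_{l,k}$ produces sixteen cross terms inside the trace. Pulling the expectation through by linearity, and using that $\mathbf{\bar D}_{l,k}$, $\mathbf{\bar G}_{k}$, and $\mathbf{\bar S}_l$ are independent and zero-mean, every cross term in which at least one error factor appears an odd number of times has vanishing expectation, leaving exactly four ``diagonal'' pairings. My claim will be that each of these four pairings contributes exactly one of the four summands in \eqref{equ13}.

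The second step is to evaluate each surviving expectation by the standard Gaussian-moment identity: when $\operatorname{Vec}(\mathbf X)$ has covariance $\delta^2\mathbf I$, $\mathbb E\{\mathbf X\mathbf A\mathbf X^{\operatorname{H}}\}=\delta^2\operatorname{Tr}(\mathbf A)\mathbf I$, hence $\mathbb E\{\operatorname{Tr}(\mathbf B\mathbf X\mathbf A\mathbf X^{\operatorname{H}})\}=\delta^2\operatorname{Tr}(\mathbf A)\operatorname{Tr}(\mathbf B)$. For the pure-$\mathbf{\bar D}$ pairing this is immediate from \eqref{equ3} with $\mathbf A=\mathbf Y_k\mathbf{\bar U}_k\mathbf Y_k^{\operatorname{H}}$ and $\mathbf B=\mathbf W_{l,i}\mathbf W_{l,i}^{\operatorname{H}}$. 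For each singly-random pairing I would trace-cycle so that the random factor appears as $\mathbf X$ and the deterministic sandwich $\boldsymbol\Theta\mathbf{\hat S}_l\mathbf W_{l,i}\mathbf W_{l,i}^{\operatorname{H}}\mathbf{\hat S}_l^{\operatorname{H}}\boldsymbol\Theta^{\operatorname{H}}$ (resp. the corresponding $\mathbf{\hat G}_k$-surrounded factor) plays the role of $\mathbf A$. For the doubly-random pairing the identity is applied twice: first averaging out $\mathbf{\bar S}_l$ reduces $\mathbf{\bar S}_l\mathbf W_{l,i}\mathbf W_{l,i}^{\operatorname{H}}\mathbf{\bar S}_l^{\operatorname{H}}$ to $\delta_{\mathbf S_l}^2\operatorname{Tr}(\mathbf W_{l,i}\mathbf W_{l,i}^{\operatorname{H}})\mathbf I_{\mathcal R\mathcal N}$, and then averaging out $\mathbf{\bar G}_k$ supplies an additional $\operatorname{Tr}(\mathbf I_{\mathcal R\mathcal N})=\mathcal R\mathcal N$ factor, producing the characteristic $\mathcal R\mathcal N\alpha^2\delta_{\mathbf S_l}^2\delta_{\mathbf G_k}^2$ coefficient of the last line of \eqref{equ13}.

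The decisive simplification driving the proposition's punch line is that each of the three IRS-related pairings, after the appropriate expectation step, exposes a product $\boldsymbol\Theta(\cdot)\boldsymbol\Theta^{\operatorname{H}}$ inside a trace; by \eqref{equ1} this equals $\alpha^2\mathbf I_{\mathcal R\mathcal N}$, so the phase shifts $\phi_{r,n}$ cancel and only the scalar $\alpha^2$ survives. Summing the four contributions over $k$, $l$, and $i$ then yields the four lines of \eqref{equ13} verbatim. The main difficulty is not conceptual but bookkeeping: correctly enumerating the sixteen cross terms, invoking the right independence argument to discard twelve of them, and propagating the stacked-block covariance structures of $\mathbf{\bar G}_k$ and $\mathbf{\bar S}_l$ (inherited from \eqref{equ3}) through repeated trace cyclings so that the Gaussian-moment identity always applies in the canonical $\mathbf X\mathbf A\mathbf X^{\operatorname{H}}$ orientation.
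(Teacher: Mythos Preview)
Your proposal is correct and follows essentially the same route as the paper's proof: expand $\mathbf{\bar H}_{l,k}^{\operatorname{H}}$ into its four additive components, discard the twelve cross terms by zero-mean independence, evaluate each of the four surviving ``diagonal'' pairings with the Gaussian second-moment identity, and eliminate the phase dependence via $\boldsymbol\Theta^{\operatorname{H}}\boldsymbol\Theta=\alpha^{2}\mathbf I_{\mathcal R\mathcal N}$ under trace cycling. The only cosmetic difference is that the paper obtains the moment identity through explicit vectorization and Kronecker-product manipulations, whereas you invoke the equivalent matrix form $\mathbb E\{\mathbf X\mathbf A\mathbf X^{\operatorname{H}}\}=\delta^{2}\operatorname{Tr}(\mathbf A)\mathbf I$ directly.
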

\begin{proof}
The complete proof is given in Appendix \ref{appexp}.
\end{proof}

The proposition verify that the expectation term in \eqref{equ12} surprisingly does not depend on phase-shifting of the passive reflecting beamforming matrices of IRSs, i.e., $\boldsymbol{\widetilde\Theta}$, while  depends on the reflecting efficiency of IRSs, i.e., $\alpha$, the total number of PSs at IRSs, i.e., $\mathcal R\times\mathcal N$, the introduced auxiliary matrices, i.e., $\mathbf {\widetilde U}$ and $\mathbf {\widetilde Y}$, and the active transmitting beamforming matrices of APs, i.e., $\mathbf {\widetilde W}$. Consequently, the CSI uncertainty impacts the optimizations of the introduced auxiliary matrices and the active transmitting beamforming matrices of APs, but surprisingly does not directly impact the optimization of the passive reflecting beamforming of IRSs.

Note that the objective function in problem $\mathcal P_2$ is convex with respect to any one of the four sets of variables $\mathbf {\tilde U}$, $\mathbf {\tilde Y}$, $\mathbf {\tilde W}$, and $\boldsymbol{\tilde \Theta}$ when the other three being fixed. Thus, inspired by this fact, in the rest of this section, we decompose problem $\mathcal P_2$ into several subproblems and solve them by employing the BCD algorithm \cite{doi:10.1137/120891009} to obtain locally optimal solutions iteratively until converge. 
The key steps for designing the IRS-assisted cell-free network are drawing as a flow chart in Fig.\ref{fc}. The solutions after the $t$-th iteration are denoted by $\left(\cdot\right)^{\left(\operatorname{t+1}\right)}$.

\begin{figure}
    \centering
 \includegraphics[width=0.5\linewidth]{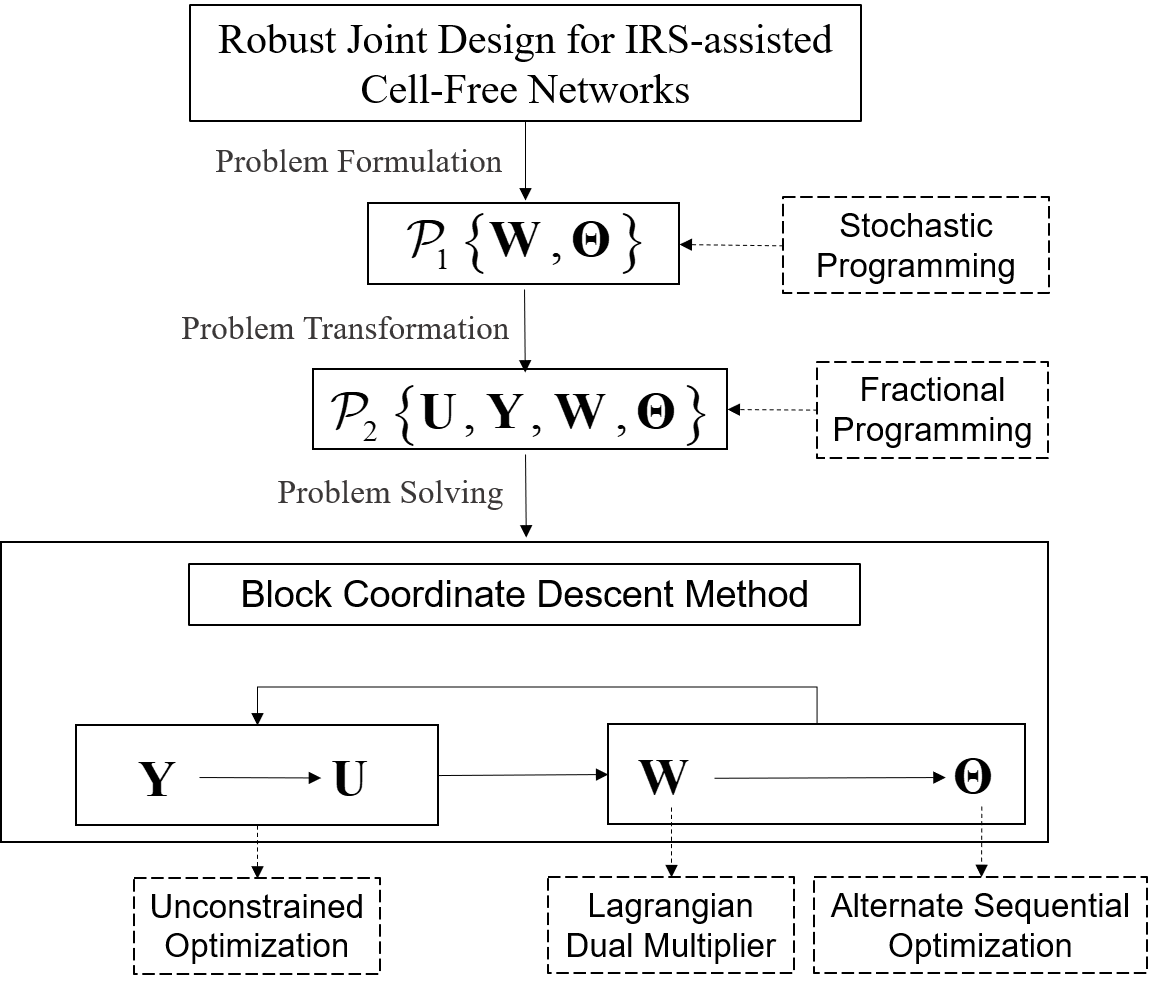}
    \caption{A flow chart of the proposed robust joint design algorithm.}
    \label{fc}
\end{figure}

\subsection{Optimization of Auxiliary Matrices}
First, with fixed variables of $\mathbf {\tilde W}^{\left(\operatorname{t}\right)}$ and $\boldsymbol{\tilde \Theta}^{\left(\operatorname{t}\right)}$, the optimal auxiliary variables $\mathbf {\tilde Y}^{\left(\operatorname{t+1}\right)}$ and $\mathbf {\tilde U}^{\left(\operatorname{t+1}\right)}$ can be determined. Particularly, note that the auxiliary variables $\mathbf {\tilde Y}$ and $\mathbf {\tilde U}$ only appear in the objective function and do not exist in any constraint sets, which implies that the calculation subproblems of the two auxiliary matrices constitute a pair of unconstrained optimization problems. As a result, the solutions can be obtained by setting the partial derivatives of $f_1\left(\mathbf {\tilde W}, \boldsymbol{\tilde\Theta}, \mathbf {\tilde U}, \mathbf {\tilde Y}\right)$ with respect to $\mathbf {\tilde Y}$ and $\mathbf {\tilde U}$ to be zeros, respectively.

After the matrix manipulations, the closed-form solution of $\mathbf {\tilde Y}$ can be expressed by
\begin{align}\label{equ14}
    \mathbf Y_k^{\left(\operatorname{t+1}\right)}=\mathbf {{\widetilde V}}_k^{\operatorname{-1}}\sum_{l=1}^{\mathcal L}{\mathbf {\hat H}_{l,k}^{\operatorname{H}}\mathbf W_{l,k}}, \forall k.
\end{align}
where 
\begin{align}
    \mathbf {{\widetilde V}}_k=&\sum_{l=1}^{\mathcal L}\sum_{i=1}^{\mathcal K}{\mathbf {\hat H}_{l,k}^{\operatorname{H}}\mathbf W_{l,i}\mathbf W_{l,i}^{\operatorname{H}}\mathbf {\hat H}_{l,k}}\\\notag
&+\sum_{l=1}^{\mathcal L}\sum_{i=1}^{\mathcal K}{\left(\delta_{\mathbf D_{l,k}}^2\mathbf I_{\mathcal M_U}+\alpha^2\delta_{\mathbf S_l}^2\mathbf {\hat G}_k^{\operatorname{H}}\mathbf {\hat G}_k+\alpha^2\mathcal R\mathcal N\delta_{\mathbf G_k}^2\delta_{\mathbf S_l}^2\mathbf I_{\mathcal M_U}\right)\operatorname{Tr}\left({\mathbf W_{l,i}\mathbf W_{l,i}^{\operatorname{H}}} \right)}\\\notag
&+\sum_{l=1}^{\mathcal L}\sum_{i=1}^{\mathcal K}{\alpha^2\delta_{\mathbf G_k}^2\mathbf I_{\mathcal M_U}\operatorname{Tr}\left({\mathbf {\hat S}_l\mathbf W_{l,i}\mathbf W_{l,i}^{\operatorname{H}}\mathbf {\hat S}_l^{\operatorname{H}}}\right)}+\sigma_k^2\mathbf I_{\mathcal M_U}.
\end{align}
The optimal solution of the auxiliary matrices $\mathbf {\tilde Y}$ determined by \eqref{equ14} is known as minimum
mean-square error (MMSE) decoding filter \cite{5756489}.

Then, by substituting the obtained optimal solution of $\mathbf {\tilde Y}^{\left(\operatorname{t+1}\right)}$ into $f_1\left(\mathbf {\tilde W}, \boldsymbol{\tilde\Theta}, \mathbf {\tilde U}, \mathbf {\tilde Y}\right)$, the closed-form solution of $\mathbf {\tilde U}^{\left(\operatorname{t+1}\right)}$ can be determined as follows
\begin{align}\label{equ15}
    \mathbf U_k^{\left(\operatorname{t+1}\right)}= \sum_{l=1}^{\mathcal L}{\mathbf W_{l,k}^{\operatorname{H}}\mathbf {\hat H}_{l,k}\mathbf {\tilde V}_k^{\operatorname{-1}}\mathbf {\hat H}_{l,k}^{\operatorname{H}}\mathbf W_{l,k}}, \forall k.
\end{align}
where $\mathbf {\tilde V}_k=\mathbf {\widetilde V}_k-\sum_{l=1}^{\mathcal L}{{\mathbf {\hat H}_{l,k}^{\operatorname{H}}\mathbf W_{l,k}\mathbf W_{l,k}^{\operatorname{H}}\mathbf {\hat H}_{l,k}}}$.

\subsection{Optimization of Active Transmitting Beamforming}
In this subsection, we study the optimization of the active transmitting beamforming $\mathbf {\tilde W}$, while fixing $\mathbf {\tilde Y} ^{\operatorname{\left(t+1\right)}}$, $\mathbf {\tilde U} ^{\operatorname{\left(t+1\right)}}$, and $\boldsymbol {\tilde \Theta} ^{\operatorname{\left(t\right)}}$. By omitting the irrelevant constant terms with respect to $\mathbf {\tilde W}$, i.e., $\sum_{k=1}^{\mathcal K}{\log\left|\mathbf {\bar U}_k\right|}$, $\sum_{k=1}^{\mathcal K}{\operatorname{Tr}\left(\mathbf U_k\right)}$, and $\sum_{k=1}^{\mathcal K}{\operatorname{Tr}\left(\sigma_k^2\mathbf {\bar U}_k\mathbf Y_k^{\operatorname{H}}\mathbf Y_k\right)}$, which have no impact on updating of $\mathbf {\tilde W}$, the subproblem of optimizing the active transmitting beamforming can be simplified by
\begin{subequations}
\begin{align}
  \mathcal P_3:\;  \mathbf {\tilde W}^{\left(\operatorname{t+1}\right)} \triangleq \arg\mathop{\min} _\mathbf {\tilde W} \quad &\sum_{l=1}^{\mathcal L}{\sum_{k=1}^{\mathcal K}{\operatorname{Tr}\left(\mathbf W_{l,k}^{\operatorname{H}}\mathbf A_{l}\mathbf W_{l,k}\right)}}\notag\\
    &-\sum_{l=1}^{\mathcal L}{\sum_{k=1}^{\mathcal K}{\operatorname{Tr}\left(\mathbf {\bar U}_k \mathbf Y_k^{\operatorname{H}}{\mathbf {\hat H}_{l,k}^{\operatorname{H}}\mathbf W_{l,k}}\right)}}\notag\\
    &-\sum_{l=1}^{\mathcal L}{\sum_{k=1}^{\mathcal K}{\operatorname{Tr}\left(\mathbf {\bar U}_k \mathbf W_{l,k}^{\operatorname{H}}\mathbf {\hat H}_{l,k}\mathbf Y_k\right)}},\label{equ16a}\\
    \quad \operatorname{s.t.} \quad& \sum_{k=1}^{\mathcal K}{\operatorname{Tr}\left(\mathbf W_{l,k}^{\operatorname{H}}\mathbf W_{l,k}\right)}\le\operatorname{P}_{l}^{\max},\forall l\in \mathcal L,\label{equ16b}
\end{align}
\end{subequations}
where 
\begin{align}
\mathbf A_l=&\sum_{k=1}^{\mathcal K}{\mathbf {\hat H}_{l,k}\mathbf Y_k\mathbf {\tilde U}_k\mathbf Y_k^{\operatorname{H}}\mathbf {\hat H}_{l,k}^{\operatorname{H}}}
+\sum_{k=1}^{\mathcal K}\left(\delta_{\mathbf D_{l,k}}^2
+\alpha^2\mathcal R\mathcal N\delta_{\mathbf G_k}^2\delta_{\mathbf S_l}^2\right){\operatorname{Tr}\left(\mathbf Y_k{\mathbf{\bar U}_k}\mathbf Y_k^{\operatorname{H}}\right)}\mathbf I_{\mathcal M_B}\notag\\
&+\sum_{k=1}^{\mathcal K}{\alpha^2\delta_{\mathbf G_k}^2\operatorname{Tr}\left(\mathbf Y_k{\mathbf{\bar U}_k}\mathbf Y_k^{\operatorname{H}}\right)}\mathbf {\hat S}_l^{\operatorname{H}}\mathbf {\hat S}_l
+\sum_{k=1}^{\mathcal K}{\alpha^2\delta_{\mathbf S}^2\operatorname{Tr}\left(\mathbf {\hat G}_k\mathbf Y_k{\mathbf{\bar U}_k}\mathbf Y_k^{\operatorname{H}}\mathbf {\hat G}_k^{\operatorname{H}}\right)}\mathbf I_{\mathcal M_B}.
\end{align}

It can be verified that the objective function in \eqref{equ16a} and the constraint in \eqref{equ16b} are both convex with respect to in $\mathbf {\tilde W}$, and thus, problem $\mathcal P_3$ constitutes a convex optimization problem, which can be efficiently solved by employing the generic convex solvers, e.g., CVX \cite{cvx}. Instead of relying on the generic solver with high computational complexity, we provide a locally optimal solution in nearly closed-form of $\mathbf {\tilde W}$ by employing Lagrangian dual multiplier method \cite{boyd2004convex}. 

Since problem $\mathcal P_3$ satisfies the Slater's condition and the dual gap is guaranteed to be zero, thus, the optimal solution can be obtained by solving its dual problem instead of its original one \cite{6336836}. The Karush–Kuhn–Tucker (KKT) conditions of problem $\mathcal P_3$ with respect to $\mathbf {\tilde W}$ are represented as
\begin{subequations}
\begin{align}
    \nabla_{\mathbf W_{l,k}}{\mathcal L\left(\mathbf {\tilde W},\tilde\lambda\right)}= 2\frac{\partial \mathcal L\left(\mathbf {\tilde W},\tilde\lambda\right)}{\partial {\mathbf W_{l,k}^{\ast}}}&=\boldsymbol{0}, \forall l,k,\label{equ18a}\\
    \lambda_l\left(\sum_{k=1}^{\mathcal K}{\operatorname{Tr}\left(\mathbf W_{l,k}^{\operatorname{H}}\left(\lambda_l\right)\mathbf W_{l,k}\left(\lambda_l\right)\right)}-\operatorname{P}_{l}^{\max}\right)&=0, \forall l,\label{equ18b}
\end{align}
\end{subequations}
where $\mathcal L\left(\mathbf {\tilde W},\tilde\lambda\right)$ is the Lagrangian dual function of problem $\mathcal P_3$ and can be expressed by
\begin{align}
    \mathcal L\left(\mathbf {\tilde W},\tilde\lambda\right)&=\sum_{l=1}^{\mathcal L}{\sum_{k=1}^{\mathcal K}{\operatorname{Tr}\left(\mathbf W_{l,k}^{\operatorname{H}}\left(\mathbf A_{l}+\lambda_l\mathbf I_{\mathcal M_B}\right)\mathbf W_{l,k}\right)}}-\sum_{l=1}^{\mathcal L}{\sum_{k=1}^{\mathcal K}{\operatorname{Tr}\left(\mathbf {\bar U}_k \mathbf Y_k^{\operatorname{H}}{\mathbf {\hat H}_{l,k}^{\operatorname{H}}\mathbf W_{l,k}}\right)}}\notag\\
    &-\sum_{l=1}^{\mathcal L}{\sum_{k=1}^{\mathcal K}{\operatorname{Tr}\left(\mathbf {\bar U}_k \mathbf W_{l,k}^{\operatorname{H}}\mathbf {\hat H}_{l,k}\mathbf Y_k\right)}}-\sum_{l=1}^{\mathcal L}{\lambda_l\operatorname{P}_{l}^{\max}},
\end{align}
and the Lagrangian multiplier $\lambda_l\ge0$ is associated with the constraint in \eqref{equ16b} for the maximal transmitting power of the $l$-th AP and $\tilde\lambda=\left\{\lambda_l,\forall l\right\}$ is the collecting of all the dual variables.

From the first KKT condition in \eqref{equ18a}, the optimal solution of $\mathbf W_{l,k},\forall \left\{l,k\right\}$ is determined by
\begin{align}
    \mathbf W_{l,k}\left(\lambda_l\right)=\left(\mathbf A_l+\lambda_l\mathbf I_{\mathcal M_B}\right)^{\operatorname{-1}}\mathbf {\hat{H}}_{l,k}\mathbf Y_k\mathbf U_k, \forall l,k.
\end{align}

The dual variable $\lambda_l$ should be determined for ensuring the second KKT condition in \eqref{equ18b} are satisfied. Let $g_l\left(\lambda_l\right)=\sum_{k=1}^{\mathcal K}{\operatorname{Tr}\left(\mathbf W_{l,k}^{\operatorname{H}}\left(\lambda_l\right)\mathbf W_{l,k}\left(\lambda_l\right)\right)}-\operatorname{P}_{l}^{\max}$, if the inverse of $\mathbf A_l$ exists and $g_l\left(0\right)\le0$ holds, then the optimal active transmitting beamforming is given as 
\begin{align}
    \mathbf W_{l,k}^{\operatorname{\left(t+1\right)}}=\mathbf W_{l,k}\left(0\right), \forall l,k.
\end{align}
Otherwise, we have to find $\lambda_l$ for ensuring $g_l\left(\lambda_l\right)=0$, i.e.,
\begin{align}
  \sum_{k=1}^{\mathcal K}{\operatorname{Tr}\left(\mathbf W_{l,k}^{\operatorname{H}}\left(\lambda_l\right)\mathbf W_{l,k}\left(\lambda_l\right)\right)}=\operatorname{P}_{l}^{\max}.  
\end{align}
Note that $g_l\left(\lambda_l\right)$ is a monotonically decreasing function of $\lambda_l$, we can find $\lambda_l$ efficiently by employing the bisection search method \cite{6336836}.
The detailed information for optimizing the active transmitting beamforming matrices are summarized in Algorithm \ref{a1}. 
\begin{algorithm}[t]
\caption{Bisection Search for Solving Problem $\mathcal P_3$} 
\label{a1} 
\begin{algorithmic}[1]
\REQUIRE the bounds $\lambda_l^{\operatorname{ub}}$ and $\lambda_l^{\operatorname{lb}}$, $ \forall l$, threshold $\varepsilon$.
\STATE \textbf{If} $g_l\left(0\right)\le0$ holds, \textbf{output} $\mathbf W_{l,k}^{\operatorname{\left(t+1\right)}}\triangleq\mathbf W_{l,k}\left(0\right), \forall l,k$ and \textbf{terminate};
\STATE \textbf{Calculate} $\lambda_l=\left(\lambda_l^{\operatorname{ub}}+\lambda_l^{\operatorname{lb}}\right)/2$ and $g_l\left(\lambda_l\right)$;
\STATE \textbf{If} $g_l\left(\lambda_l\right)\le0$, set $\lambda_l^{\operatorname{ub}}=\lambda_l$, \textbf{otherwise}, set $\lambda_l^{\operatorname{lb}}=\lambda_l$;\\
\STATE \textbf{If} $\left|\lambda_l^{\operatorname{ub}}-\lambda_l^{\operatorname{lb}}\right|\le\varepsilon$, \textbf{output} $\lambda_l$ and $\mathbf W_{l,k}^{\operatorname{\left(t+1\right)}}\triangleq\mathbf W_{l,k}\left(\lambda_l\right), \forall l,k$, and then \textbf{terminate}.\\ \textbf{Otherwise}, go to step 2.
\end{algorithmic} 
\end{algorithm}

\subsection{Optimization of Passive Reflecting Beamforming}
In this subsection, we consider to optimize $\boldsymbol{\tilde\Theta}$, while fixing the auxiliary matrices $\mathbf {\tilde U}^{\left(\operatorname{t+1}\right)}$ and $\mathbf {\tilde Y}^{\left(\operatorname{t+1}\right)}$, and the active transmitting beamforming $\mathbf {\tilde W}^{\left(\operatorname{t+1}\right)}$. 
First, for the convenience of representation, we define the following matrix merging operations:
\begin{subequations}
\begin{align}
   \mathbf {W}_i&=\left[\mathbf W_{1,i}^{\operatorname{T}},\mathbf W_{2,i}^{\operatorname{T}},\cdots,\mathbf W_{\mathcal L,i}^{\operatorname{T}}\right]^{\operatorname{T}}\in \mathbb C^{\mathcal L \mathcal M_B\times \mathcal M_U},\\
   \mathbf {\hat D}_{k}&=\left[\mathbf {\hat D}_{1,k}^{\operatorname{T}},\mathbf {\hat D}_{2,k}^{\operatorname{T}},\cdots,\mathbf {\hat D}_{\mathcal L,k}^{\operatorname{T}}\right]^{\operatorname{T}}\in \mathbb C^{\mathcal L \mathcal M_B\times \mathcal M_U},\\
   \mathbf {\hat S}&=\left[\mathbf {\hat S}_{1},\mathbf {\hat S}_{2},\cdots,\mathbf {\hat S}_{\mathcal L}\right]\in \mathbb C^{\mathcal {\widetilde N}\times \mathcal L\mathcal M_B},
\end{align}
\end{subequations}
where $\mathcal {\widetilde N}\triangleq\mathcal R\mathcal N$. Then, by defining 
$\mathbf { W}=\sum_{i=1}^{\mathcal K}{\mathbf {W}_i \mathbf {W}_i^{\operatorname{H}}}$,
we have
\begin{subequations}\label{eq31}
\begin{align}
    \sum_{l=1}^{\mathcal L}{\mathbf {\hat H}_{l,k}^{\operatorname{H}}\mathbf W_{l,k}}&=\mathbf {\hat D}_{k}^{\operatorname{H}}\mathbf W_{k}+\mathbf {\hat G}_{k}^{\operatorname{H}}{\boldsymbol\Theta}\mathbf {\hat S} \mathbf W_{k},\\
\sum_{l=1}^{\mathcal L}\sum_{i=1}^{\mathcal K}{\mathbf {\hat H}_{l,k}^{\operatorname{H}}\mathbf W_{l,i}\mathbf W_{l,i}^{\operatorname{H}}\mathbf {\hat H}_{l,k}}
 &=\mathbf {\hat D}_{k}^{\operatorname{H}}\mathbf W \mathbf {\hat D}_{k}+\mathbf {\hat D}_{k}^{\operatorname{H}}\mathbf W \mathbf {\hat S}^{\operatorname{H}} {\boldsymbol\Theta}^{\operatorname{H}} \mathbf {\hat G}_k \notag\\
 &+\mathbf {\hat G}_k^{\operatorname{H}}{\boldsymbol\Theta}\mathbf {\hat S} \mathbf W\mathbf {\hat D}_k
 + \mathbf {\hat G}_k^{\operatorname{H}}{\boldsymbol\Theta}\mathbf {\hat S}\mathbf W \mathbf {\hat S}^{\operatorname{H}}{\boldsymbol\Theta}^{\operatorname{H}}\mathbf {\hat G}_k. 
\end{align}
\end{subequations}

By substituting \eqref{eq31} into $f_1\left(\mathbf {\tilde W}, \boldsymbol{\tilde\Theta}, \mathbf {\tilde U}, \mathbf {\tilde Y}\right)$ and omitting irrelevant constants with respect to $\boldsymbol{\tilde\Theta}$, i.e., $\sum_{k=1}^{\mathcal K}{\operatorname{Tr}\left(\mathbf {\bar U}_k \mathbf Y_k^{\operatorname{H}}\mathbf D_{k}^{\operatorname{H}}\mathbf W \mathbf D_{k}\mathbf Y_k\right)}$, $\sum_{k=1}^{\mathcal K}{\operatorname{Tr}\left(\mathbf {\bar U}_k \mathbf Y_k^{\operatorname{H}}\mathbf D_{k}^{\operatorname{H}}\mathbf W_k\right)}$, $\sum_{k=1}^{\mathcal K}{\operatorname{Tr}\left(\mathbf {\bar U}_k\mathbf W_k^{\operatorname{H}}\mathbf D_{k}\mathbf Y_k\right)}$, and $\sum_{k=1}^{\mathcal K}{\operatorname{Tr}\left(\sigma_k^2\mathbf {\bar U}_k\mathbf Y_k^{\operatorname{H}}\mathbf Y_k\right)}$, which have no impact on optimizing the passive reflecting beamforming at IRSs, we have the following simplified problem with respect to $\boldsymbol{\tilde\Theta}$:
\begin{subequations}\label{equ16}
\begin{align}
  \mathcal P_4:\;  \boldsymbol {\tilde \Theta}^{\left(\operatorname{t+1}\right)} \triangleq \arg\mathop{\min}_{\boldsymbol {\tilde \Theta}} \quad &\sum_{k=1}^{\mathcal K}{\operatorname{Tr}\left(\mathbf {\bar U}_k \mathbf Y_k^{\operatorname{H}} \mathbf {\hat G}_k^{\operatorname{H}}{\boldsymbol\Theta}\mathbf {\hat S}\mathbf W_k\right)}\notag\\
  &+\sum_{k=1}^{\mathcal K}{\operatorname{Tr}\left(\mathbf {\bar U}_k \mathbf W_k^{\operatorname{H}}\mathbf {\hat S}^{\operatorname{H}}{\boldsymbol\Theta}^{\operatorname{H}}\mathbf {\hat G}_k \mathbf Y_k\right)}\notag\\
    &-\sum_{k=1}^{\mathcal K}{\operatorname{Tr}\left(\mathbf {\bar U}_k \mathbf Y_k^{\operatorname{H}}\mathbf {\hat G}_k^{\operatorname{H}}{\boldsymbol\Theta}\mathbf {\hat S}\mathbf W \mathbf {\hat S}^{\operatorname{H}}{\boldsymbol\Theta}^{\operatorname{H}}\mathbf {\hat G}_k\mathbf Y_k\right)}\notag\\
    &-\sum_{k=1}^{\mathcal K}{\operatorname{Tr}\left(\mathbf {\bar U}_k \mathbf Y_k^{\operatorname{H}}\mathbf {\hat G}_k^{\operatorname{H}}{\boldsymbol\Theta}\mathbf {\hat S} \mathbf W\mathbf {\hat D}_k\mathbf Y_k\right)}\notag\\
    &-\sum_{k=1}^{\mathcal K}{\operatorname{Tr}\left(\mathbf {\bar U}_k \mathbf Y_k^{\operatorname{H}}\mathbf {\hat D}_{k}^{\operatorname{H}}\mathbf W \mathbf {\hat S}^{\operatorname{H}}{\boldsymbol\Theta}^{\operatorname{H}} \mathbf {\hat G}_k\mathbf Y_k\right)},\label{equ24a}\\
     \operatorname{s.t.} \quad& \left|\boldsymbol\Theta_{n,n}\right|=\alpha,\forall n\in \mathcal {\widetilde N}.\label{equ24b}
\end{align}
\end{subequations}

The above problem is still arduous to tackle. As follows, we transform problem $\mathcal P_4$ into an equivalent but more tractably form by employing some further algebraic  manipulations.

First, by defining $\mathbf Z_k = \mathbf {\hat G}_k \mathbf Y_k \mathbf {\bar U}_k \mathbf Y_k^{\operatorname{H}} \mathbf {\hat G}_k^{\operatorname{H}}$, $\mathbf Z =\sum_{k=1}^{\mathcal{K}}{\mathbf Z_k}$, and $\mathbf Q= \mathbf {\hat S} \mathbf W \mathbf {\hat S}^{\operatorname{H}}$, we have 
\begin{align}
   \sum_{k=1}^{\mathcal{K}}{\operatorname{Tr}\left(\mathbf {\bar U}_k \mathbf Y_k^{\operatorname{H}}\mathbf {\hat G}_k^{\operatorname{H}}\boldsymbol\Theta\mathbf {\hat S}\mathbf W \mathbf {\hat S}^{\operatorname{H}}\boldsymbol\Theta^{\operatorname{H}}\mathbf {\hat G}_k\mathbf Y_k\right)}=\operatorname{Tr}\left(\boldsymbol\Theta^{\operatorname{H}}\mathbf Z \boldsymbol\Theta \mathbf Q\right).
\end{align}
Then, by defining $\mathbf C_k =\mathbf {\hat G}_k\mathbf Y_k\mathbf {\bar U}_k\mathbf Y_k^{\operatorname{H}}\mathbf {\hat D}_k^{\operatorname{H}}\mathbf W\mathbf {\hat S}^{\operatorname{H}}$ and $\mathbf C =\sum_{k=1}^{\mathcal{K}}{\mathbf C_k}$, we have 
\begin{subequations}
\begin{align}
    \sum_{k=1}^{\mathcal{K}}{\operatorname{Tr}\left(\mathbf {\bar U}_k \mathbf Y_k^{\operatorname{H}}\mathbf {\hat D}_{k}^{\operatorname{H}}\mathbf W \mathbf {\hat S}^{\operatorname{H}}\boldsymbol\Theta^{\operatorname{H}} \mathbf {\hat G}_k\mathbf Y_k\right)}&=\operatorname{Tr}\left(\boldsymbol\Theta^{\operatorname{H}}\mathbf C\right),\\
    \sum_{k=1}^{\mathcal{K}}{\operatorname{Tr}\left(\mathbf {\bar U}_k \mathbf Y_k^{\operatorname{H}}\mathbf {\hat G}_k^{\operatorname{H}}\boldsymbol\Theta\mathbf {\hat S} \mathbf W\mathbf {\hat D}_k\mathbf Y_k\right)}&=\operatorname{Tr}\left(\mathbf C^{\operatorname{H}}\boldsymbol\Theta\right).
\end{align}
\end{subequations}
Next, by defining $\mathbf E_k=\mathbf {\hat G}_k\mathbf Y_k\mathbf {\bar U}_k\mathbf W_k^{\operatorname{H}}\mathbf {\hat S}^{\operatorname{H}}$ and $\mathbf E =\sum_{k=1}^{\mathcal{K}}{\mathbf E_k}$, we have 
\begin{subequations}
\begin{align}
    \sum_{k=1}^{\mathcal{K}}{\operatorname{Tr}\left(\mathbf {\bar U}_k \mathbf W_k^{\operatorname{H}}\mathbf {\hat S}^{\operatorname{H}}\boldsymbol\Theta^{\operatorname{H}}\mathbf {\hat G}_k \mathbf Y_k\right)}&=\operatorname{Tr}\left(\Theta^{\operatorname{H}}\mathbf E \right),\\
    \sum_{k=1}^{\mathcal{K}}{\operatorname{Tr}\left(\mathbf {\bar U}_k \mathbf Y_k^{\operatorname{H}} \mathbf {\hat G}_k^{\operatorname{H}}\boldsymbol\Theta\mathbf {\hat S}\mathbf W_k\right)}&=\operatorname{Tr}\left(\mathbf E ^{\operatorname{H}}\boldsymbol\Theta\right).
\end{align}
\end{subequations}
Equivalently, we have a sequence of equalities as follows
\begin{align}\label{equ30}
    \operatorname{Tr}\left(\boldsymbol\Theta^{\operatorname{H}}\mathbf Z \boldsymbol\Theta \mathbf Q\right)=\boldsymbol\theta^{\operatorname{H}}\boldsymbol{\mathcal Z}\boldsymbol\theta,\; \operatorname{Tr}\left(\boldsymbol\Theta^{\operatorname{H}}\boldsymbol\Omega \right)= \boldsymbol\theta^{\operatorname{H}}\boldsymbol\omega,
\end{align}
where $\boldsymbol{\mathcal Z} \triangleq \mathbf Z \odot \mathbf Q^{\operatorname{T}}\in \mathbb C^{\widetilde{\mathcal N}\times \widetilde{\mathcal N}}$, $\boldsymbol\Omega=\mathbf E - \mathbf C\in \mathbb C^{\widetilde{\mathcal N}\times \widetilde{\mathcal N}}$,
\begin{align}\label{equ31}
    \boldsymbol\theta=\operatorname{Vecd}\left(\boldsymbol\Theta\right)\in \mathbb C^{\widetilde{\mathcal N}\times 1},\; \boldsymbol\omega=\operatorname{Vecd}\left(\boldsymbol\Omega\right)\in \mathbb C^{\widetilde{\mathcal N}\times 1},
\end{align}
where $\operatorname{Vecd}\left(\mathbf X\right)$ forms a vector out of the diagonal of its matrix argument. The sequence of equalities given above in \eqref{equ30} follow from the properties in \cite[Theorem 1.11]{2017Matrix}.

Then, based on the aforementioned equivalent transformations, problem $\mathcal P_4$ can be reformulated by 
\begin{subequations}
\begin{align}
   \mathcal P_5:\; \boldsymbol\tilde\theta^{\operatorname{\left(t+1\right)}}\triangleq\mathop{\max}_{\boldsymbol\tilde\theta} &\quad -\boldsymbol\theta^{\operatorname{H}}\boldsymbol{\mathcal Z}\boldsymbol\theta+\boldsymbol\theta^{\operatorname{H}}\boldsymbol\omega+\boldsymbol\omega^{\operatorname{H}}\boldsymbol\theta\label{equ30a}\\
    \operatorname{s.t.} &\quad \left|\boldsymbol\theta_{n}\right|={\alpha},n =1,2,\cdots, \widetilde{\mathcal N}.\label{equ30b}
\end{align}
\end{subequations}

Due to the non-convexity of the constant modulus constraints in \eqref{equ30b}, the above optimization problem is still non-convex, and belongs to the class of NP-hard problems. The widely employed methods to solve problem $\mathcal P_5$ with high-quality suboptimal solution are employing semidefinte relaxation \cite{8930608} or quadratic relaxation \cite{8982186}, however, both the relaxation-based techniques suffer from high computational complexities, i.e., the order are larger than $\mathcal O\left(\mathcal R^6\mathcal N^6\right)$ \cite{8982186}. In practice, since $\mathcal R\times\mathcal {N}$ is large, it is computational impractical in real world. Hence, in the following, we provide a low complexity alternate sequential optimization (ASO) algorithm \cite{7946256} to solve this non-convex problem. 

Note that the objective function in \eqref{equ30a} and the constant modulus constraints in \eqref{equ30b} in problem $\mathcal P_5$ are separable with respect to $\boldsymbol\theta_i,\forall i \in \widetilde{\mathcal N}$ \cite{8930608,9110912}, therefore, we can decompose problem $\mathcal P_5$ into $\widetilde{\mathcal N}$ separate subproblems and solve them one-by-one. Particularly, we have 
\begin{align}
    \boldsymbol\theta^{\operatorname{H}}\boldsymbol\omega=\sum_{n=1}^{\widetilde{\mathcal N}}{\boldsymbol\theta_n^\ast\boldsymbol\omega_n}=\boldsymbol\theta_i^\ast\boldsymbol\omega_i +\sum_{n=1, n\ne i}^{\widetilde{\mathcal N}}{\boldsymbol\theta_n^{\ast}\boldsymbol\omega_n}.
\end{align}
Meanwhile, $\boldsymbol\theta^{\operatorname{H}} \boldsymbol{\mathcal { Z}} \boldsymbol\theta$ can be expanded as
\begin{align}\label{eq44}
     \boldsymbol\theta^{\operatorname{H}} \boldsymbol{\mathcal { Z}} \boldsymbol\theta&=\sum_{n=1, n\ne i}^{\widetilde{\mathcal N}}{ \boldsymbol\theta^{\operatorname{H}}\mathbf z _n \boldsymbol\theta_n}+{ \boldsymbol\theta^{\operatorname{H}}\mathbf z _i \boldsymbol\theta_i}\notag\\
     &=\sum_{n=1, n\ne i}^{\widetilde{\mathcal N}}{ \boldsymbol\theta_i^{\ast} z _{i,n} \boldsymbol\theta_n}+{ \boldsymbol\theta^{\operatorname{H}}\mathbf z _i \boldsymbol\theta_i}+\sum_{m=1, m\ne n}^{\widetilde{\mathcal N}}{\sum_{p=1, p\ne i}^{\widetilde{\mathcal N}}{ \boldsymbol\theta_m^{\ast} z _{m,p} \boldsymbol\theta_p}}\notag\\
    &= \boldsymbol\theta_i^{\ast} z _{i,i} \boldsymbol\theta_i+\sum_{n=1, n\ne i}^{\widetilde{\mathcal N}}{\left( \boldsymbol\theta_i^{\ast} z _{i,n} \boldsymbol\theta_n+ \boldsymbol\theta_i z _{n,i}\boldsymbol\theta_n^{\ast}\right)}+\sum_{m=1, m\ne n}^{\widetilde{\mathcal N}}{\sum_{p=1, p\ne i}^{\widetilde{\mathcal N}}{ \boldsymbol\theta_m^{\ast} z _{m,p} \boldsymbol\theta_p}},
\end{align}
where $ \boldsymbol{\mathcal Z} = \left[\mathbf z_1,\mathbf z_2,\cdots,\mathbf z_{\widetilde{\mathcal N}}\right]$ with $\mathbf z_n =\left[z_{1,n}, z_{2,n},\cdots,z_{\widetilde{\mathcal N},n}\right]^{\operatorname{T}}\in \mathbb C^{\widetilde{\mathcal{N}}\times 1}$. By using the property $z_{i,n}=z_{n,i}^{\ast}$ and basing on the fact that $ \boldsymbol{\mathcal { Z}}$ is a positive semi-definite matrix, we have
\begin{align}
     &-\boldsymbol\theta^{\operatorname{H}} \boldsymbol{\mathcal { Z}} \boldsymbol\theta
    +\boldsymbol\theta^{\operatorname{H}}\boldsymbol\omega+\boldsymbol\omega^{\operatorname{H}}\boldsymbol\theta\notag\\
    =& 2\operatorname{Re}\left\{\boldsymbol\theta_i^{\ast}\boldsymbol\omega_i+\sum_{n=1, n\ne i}^{\widetilde{\mathcal N}}{\boldsymbol\theta_n^{\ast}\boldsymbol\omega_n}-\sum_{n=1, n\ne i}^{\widetilde{\mathcal N}}{ \boldsymbol\theta_i^{\ast} z _{i,n} \boldsymbol\theta_n}\right\} 
    -\theta_i^{\ast} z _{i,i} \theta_i-\sum_{m=1, m\ne n}^{\widetilde{\mathcal N}}{\sum_{p=1, p\ne i}^{\widetilde{\mathcal N}}{ \boldsymbol\theta_m^{\ast} z _{m,p} \boldsymbol\theta_p}}\notag\\
    \triangleq&2\operatorname{Re}\left\{\boldsymbol\theta_i^{\ast}\boldsymbol\mu_i\right\}+\boldsymbol\xi
\end{align}
where 
\begin{subequations}
\begin{align}
    \boldsymbol\mu_{i}&=\boldsymbol\omega_i - \sum_{n=1, n\ne i}^{\widetilde{\mathcal N}}{ z _{i,n} \boldsymbol\theta_n},\\
    \boldsymbol\xi&=2\operatorname{Re}\left\{\sum_{n=1, n\ne i}^{\widetilde{\mathcal N}}{\boldsymbol\theta_n^{\ast}\boldsymbol\omega_n}\right\}-\sum_{m=1, m\ne n}^{\widetilde{\mathcal N}}{\sum_{p=1, p\ne i}^{\widetilde{\mathcal N}}{ \boldsymbol\theta_m^{\ast} z _{m,p} \theta_p}}-\boldsymbol\theta_i^{\ast} z _{i,i} \boldsymbol\theta_i,
\end{align}
\end{subequations}
where $\boldsymbol\xi$ is the irreverent constant term with regard to $\boldsymbol\theta_i$ (e.g., $\boldsymbol\theta_i^{\ast} z _{i,i} \boldsymbol\theta_i=z _{i,i}\left| \boldsymbol\theta_i\right|^2={\alpha^2}{z _{i,i}}$), which do not affect the optimal solution.
Thus, we can only investigate $\operatorname{Re}\left\{ \boldsymbol\theta_i^{\ast}\boldsymbol\mu_i\right\}$ for sequentially optimizing each PS while fixing the remaining $\widetilde{\mathcal N}-1$ PSs. Hence, we have
\begin{subequations}
\begin{align}
   \mathcal P_6:\; \mathop{\max}_{ \boldsymbol\theta_i}\quad& \operatorname{Re}\left\{ \boldsymbol\theta_i^{\ast}\boldsymbol\mu_i\right\}\\
    \operatorname{s.t.}\quad&\left| \boldsymbol\theta_i\right|={\alpha},
\end{align}
\end{subequations}

An equivalent expression for the above problem is given by
\begin{subequations}
\begin{align}
  \mathcal P_7:\;  \mathop{\max}_{ \phi_i}\quad& \cos\left(- \phi_i+\eta_i\right)\\
    \operatorname{s.t.}\quad& \phi_i\in\left[0,2\pi\right],
\end{align}
\end{subequations}
where $\eta_i$ and $- \phi_i$ are the phase-shifting of $\boldsymbol\mu_i$ and $ \boldsymbol\theta_i^{\ast}$, respectively. Thus, problem $\mathcal P_7$ has a closed-form optimal, which is given by
\begin{align}
     \phi_i=\eta_i, \forall i \in \widetilde{\mathcal N}.
\end{align}
Accordingly, we have 
\begin{align}\label{equ38}
     \boldsymbol\theta_i={\alpha}e^{j\eta_i}, \forall i \in \widetilde{\mathcal N}.
\end{align}

Based on the above discussions, the procedure of sequentially optimizing $ \boldsymbol\theta_1, \boldsymbol\theta_2,\cdots, \boldsymbol\theta_{\widetilde{\mathcal N}}$ and then repeatedly until the convergence is attained. The details for optimizing the locally optimal passive reflecting beamforming by employing the ASO algorithm are summarized in Algorithm \ref{a2}, and we have the following lemma
\begin{lemma}\label{lemmaconver}
Algorithm \ref{a2} is guaranteed to converge.
\end{lemma}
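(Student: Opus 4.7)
The plan is to invoke the standard monotone-bounded-convergence argument that is ubiquitous for block coordinate ascent. Concretely, I will show two things: (i) the objective value in \eqref{equ30a} is monotonically non-decreasing across the iterations of Algorithm \ref{a2}, and (ii) the objective is bounded above on the feasible set. Together these two facts imply that the sequence of objective values generated by Algorithm \ref{a2} converges, which is the content of the lemma.

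For (i), I would exploit the fact that each inner step of Algorithm \ref{a2} is an \emph{exact} coordinatewise maximizer. The decomposition performed in \eqref{eq44} isolates the dependence on a single $\boldsymbol\theta_i$ as $2\operatorname{Re}\{\boldsymbol\theta_i^{\ast}\boldsymbol\mu_i\}+\boldsymbol\xi$, with $\boldsymbol\xi$ constant in $\boldsymbol\theta_i$. The update $\boldsymbol\theta_i=\alpha e^{j\eta_i}$ in \eqref{equ38} is the closed-form global maximizer of problem $\mathcal P_7$, hence of the original objective \eqref{equ30a} along the $i$-th coordinate with the remaining $\widetilde{\mathcal N}-1$ variables held fixed. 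Consequently, no coordinate update can decrease the objective, and the full outer iteration is therefore non-decreasing.

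For (ii), I would observe that the constant-modulus constraint \eqref{equ30b} confines each $\boldsymbol\theta_i$ to the circle of radius $\alpha$ in $\mathbb C$, so the feasible set is a product of $\widetilde{\mathcal N}$ such circles and is compact. Since the objective in \eqref{equ30a} is a continuous (quadratic) function of $\boldsymbol\theta$, it attains its supremum on this compact set and is in particular bounded above.

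Combining (i) and (ii), the sequence of objective values is monotone and bounded, and thus converges by the monotone convergence theorem, which proves the lemma. There is no genuinely hard step here: the one point to handle carefully is to confirm that the residual term $\boldsymbol\xi$ in the decomposition of \eqref{eq44} truly does not depend on the coordinate currently being updated (so that maximizing $\operatorname{Re}\{\boldsymbol\theta_i^{\ast}\boldsymbol\mu_i\}$ is equivalent to maximizing the full objective along that coordinate); this has already been verified in the derivation leading to \eqref{equ38}, so invoking it here is immediate. Note that, as is standard, this argument establishes convergence of the objective value but not necessarily of the iterate sequence itself; that stronger conclusion is not claimed in the lemma.
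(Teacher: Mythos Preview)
Your proposal is correct and follows essentially the same monotone--bounded argument as the paper: the paper also establishes (i) via the chain of inequalities $\rho^{\operatorname{u}}\le\rho _1^{\operatorname{u+1}}\le\cdots\le\rho _{\widetilde{\mathcal N}}^{\operatorname{u+1}}=\rho ^{\operatorname{u+1}}$ and (ii) by exhibiting explicit upper bounds $-\boldsymbol\theta^{\operatorname{H}}\boldsymbol{\mathcal Z}\boldsymbol\theta\le-\alpha^2\widetilde{\mathcal N}\lambda_{\boldsymbol{\mathcal Z}}^{\min}$ and $\operatorname{Re}\{\boldsymbol\theta^{\operatorname{H}}\boldsymbol\omega\}\le\alpha\sum_{n}|\boldsymbol\omega_n|$, rather than your compactness-plus-continuity argument, but this is only a cosmetic difference in how boundedness is verified.
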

\begin{proof}
The proof is presented in Appendix \ref{lemma2proof}
\end{proof}

\begin{algorithm}[t]
\caption{ASO Algorithm for Optimizing the Passive Reflecting Beamforming} 
\label{a2} 
\begin{algorithmic}[1] 
\REQUIRE $ \boldsymbol{\mathcal{Z}}$, threshold $\varepsilon$, $ \boldsymbol\theta^{\operatorname{0}}\triangleq\boldsymbol\theta^{\left(\operatorname{t}\right)}$, $u=1$.
\STATE \textbf{Sequentially Calculate} $ \boldsymbol\theta_i^{\operatorname{u}},\forall i \in {\widetilde{\mathcal N}}$, by using \eqref{equ38};
\STATE \textbf{Calculate} $\rho^{\operatorname{u}}=-\left({\boldsymbol\theta^{\operatorname{u}}}\right)^{\operatorname{H}}\boldsymbol{\mathcal Z}\boldsymbol\theta+\left({\boldsymbol\theta^{\operatorname{u}}}\right)^{\operatorname{H}}\boldsymbol\omega+\boldsymbol\omega^{\operatorname{H}}\boldsymbol\theta^{\operatorname{u}}$;
\STATE \textbf{If} $\left|\rho^{\operatorname{u+1}}-\rho^{\operatorname{u}}\right|\le\varepsilon$, \textbf{output} $ \boldsymbol\theta^{\left(\operatorname{t+1}\right)}\triangleq \boldsymbol\theta^{\operatorname{u}}$ and then \textbf{terminate};\\
\textbf{Otherwise}, update $\operatorname{u} \leftarrow \operatorname{u+1}$ and go to Step 1.
\end{algorithmic} 
\end{algorithm}

\subsection{Analysis of the Overall Algorithm}
\begin{algorithm}[t]
\caption{BCD-based RJD Algorithm for Solving Problem $\mathcal P_1$} 
\label{a3} 
\begin{algorithmic}[1]
\REQUIRE    
    $\mathbf {\tilde W}^{\left ( \operatorname{0} \right )}$,
    $\boldsymbol{\tilde\Theta} ^{\left ( \operatorname{0} \right )}$;
    threshold $\varepsilon$, $\operatorname{t}=0$.
\STATE \textbf{Calculate} $\mathbf {\tilde Y} ^{\left ( \operatorname{t+1} \right )}$ by using \eqref{equ14};
\STATE \textbf{Calculate} $\mathbf {\tilde U} ^{\left ( \operatorname{t+1} \right )}$, by using \eqref{equ15};
\STATE \textbf{Calculate} $\mathbf {\tilde W} ^{\left ( \operatorname{t+1} \right )}$, by employing Algorithm \ref{a1};
\STATE \textbf{Calculate} $\boldsymbol{\tilde \Theta}^{\left ( \operatorname{t+1}\right )}$, by employing Algorithm \ref{a2};\\
\textbf{If} {${{\left| {{\mathcal{R}^{\left( \operatorname{t+1} \right)}}- {\mathcal{R}^{\left( \operatorname{t}\right)}}} \right|}} < \varepsilon$}, \textbf{output} $\mathbf {\tilde W}_{l,k}^{\operatorname{opt}} \triangleq \mathbf {\tilde W}  ^{\left ( \operatorname{t} \right )}$ and $\boldsymbol{\tilde \Theta}^{\operatorname{opt}} \triangleq \boldsymbol{\tilde \Theta}^{\left (  \operatorname{t} \right )}$, and then \textbf{terminate};
\textbf{Otherwise}, update $\operatorname{t}\leftarrow\operatorname{t+1}$, and go to step 1.
\end{algorithmic} 
\end{algorithm}
The details of the proposed BCD-based robust joint design (RJD) algorithm are summarized in Algorithm \ref{a3}. 
The convergence of the proposed RJD algorithm can be guaranteed since the average sum-rate $\mathcal{R}\left(\mathbf {\tilde W},\boldsymbol{\tilde\Theta}  \right)$ is monotonically non-decreasing after each updating step in Algorithm \ref{a3}. Overall, it can be summarized that 
\begin{align}
    \mathcal{R}\left( {{\mathbf {\tilde W}^{\left(\operatorname{t+1}\right)}},{\boldsymbol{\tilde\Theta}^{\left(\operatorname{t+1}\right)} }} \right)\ge \mathcal{R}\left( {{\mathbf {\tilde W}^{\left(\operatorname{t+1}\right)}},{\boldsymbol{\tilde\Theta}^{\left(\operatorname{t}\right)} }} \right)\ge\mathcal{R}\left( {{\mathbf {\tilde W}^{\left(\operatorname{t}\right)}},{\boldsymbol{\tilde\Theta}^{\left(\operatorname{t}\right)} }} \right)
    \ge \cdots \ge \mathcal{R}\left( {{\mathbf {\tilde W}^{\left(\operatorname{1}\right)}},{\boldsymbol{\tilde\Theta}^{\left(\operatorname{1}\right)} }} \right).
\end{align}
Besides, $\mathcal{R}\left( {{\mathbf {\tilde W}},{\boldsymbol{\tilde\Theta}}} \right)$ is upper-bounded by a finite value due to the limited transmit powers of APs and a finite number of PSs of IRSs. As the number of iterations increases, we finally have $ \mathcal{R}\left( {{\mathbf {\tilde W}^{\operatorname{opt}}},{\boldsymbol{\tilde\Theta}^{\operatorname{opt}} }} \right)\triangleq\mathcal{R}\left( {{\mathbf {\tilde W}^{\left( \operatorname{t}_{\max}\right)}},{\boldsymbol{\tilde\Theta}^{\left(\operatorname{t}_{\max}\right)} }} \right)$, where $\operatorname{t}_{\max}$ is the maximal number of iterations when Algorithm \ref{a3} converges.

Meanwhile, the complexity of Algorithm \ref{a3} is briefly discuss as follows. The complexities of updating $\mathbf U_k$ and $\mathbf Y_k$ are on the order of $\mathcal O \left(\mathcal M_U^3\right), \forall k$, respectively. The complexity of calculating $\mathbf W_{l,k}$ is $\mathcal O\left(\mathcal M_B^3\right), \forall {l,k}$, and the complexity of updating $\boldsymbol\Theta$ is $\mathcal O\left(\mathcal {R}^2\mathcal {N}^2\right)$. Therefore, based on the aforementioned discussions, the total complexity of Algorithm \ref{a3} is $\mathcal O\left(\mathcal I_{3}\left(2\mathcal K\mathcal M_U^3+\mathcal I_{1}\mathcal L\mathcal K\mathcal M_B^3+\mathcal I_{\operatorname{2}}\mathcal {R}^2\mathcal {N}^2\right)\right)$, where $\mathcal I_{1}$, $\mathcal I_{\operatorname{2}}$ and $\mathcal I_{3}$ are the number of iterations when Algorithms \ref{a1}, \ref{a2},  and \ref{a3} converge, respectively. 

\section{Numerical Simulation and Discussion}{\label{sec4}}
As follows, simulation results are provided to evaluate the performance of the proposed BCD-based RJD algorithm. 
We consider a system schematic as shown in Fig. \ref{f2}, a three-IRSs assisted cell-free wireless communication network, where six-APs are transmitting signals to four-UEs cooperatively. 
We assume a 3-D scenario, where the height of the APs, the IRSs, and the UEs are 3m, 6m, and 1.5m, respectively. The four UEs are uniformly and randomly distributed in a circle centered at $(\chi,100)$ with a radius of 10m and the locations of APs and IRSs are shown in Fig. \ref{f2}. The distance-dependent large-scale path loss is denoted by
\begin{align}
    L\left(d_{x}\right)= \mathcal C_0 \left({d_x}/{d_0}\right)^{-p_x},
\end{align}
where $\mathcal C_0=-30$ dB is the channel power gain at the reference distance $d_0=1$m, and $d_x$ and $p_x$ denote the link distance and path loss exponent. We set $p_{D_{l,k}}=3.75,\forall {l,k}$, $p_{S_{l,r}}=p_{G_{r,k}}=2.2,\forall {l,k,r}$. 
Meanwhile, we assume the PSs along the horizontal and vertical are $\mathcal N_h=10$ and $\mathcal N_v=\mathcal N/\mathcal N_h$, respectively. The noise variance at each UE is $\sigma_k^2=\sigma^2=-90$ dBm. For the statistical CSI error model, the normalized CSI errors are set as $\kappa^2=\kappa_{\mathbf D}^2=\kappa_{\mathbf S}^2=\kappa_{\mathbf F}^2=0.001$.
Unless otherwise stated, the other simulation parameters used in the rest of this section are setting as $\mathcal M_B=4$, $\mathcal M_U=2$, $\mathcal N=20$, $\alpha=1$, $P_{l}^{\max}=0.1$ W, and the Rician factors are $\beta_{G}=\beta_{S}=3$. 

Furthermore, in practice, the phase-shifting of each PS can only take finite discrete value from the discrete phase-shifting set \cite{8930608,9279253}. Let $b$ denote the number of bits to represent the resolution levels of IRS, the discrete phase-shifting set is represented by
\begin{align}
   \phi_{{\mathcal D}}\in \mathcal D\triangleq \left\{\frac{2\pi g}{2^b} \left| g=0, 1, 2, \cdots, 2^b-1\right.\right\}. 
\end{align}
To our best knowledge, the only way to obtain the global optimal from the discrete set is to perform exhaustive search, and the computational complex is $\mathcal O \left(2^{b\mathcal R\mathcal {N}}\right)$, i.e., exponential increasing with respect to the number of PSs of IRSs $\mathcal R\times\mathcal {N}$ and the IRS resolution $b$ bits. In practice, since $\mathcal R\times\mathcal {N}$ is large, it is computational impractical in real world. As a compromising approach, the common way to obtain the local optimal discrete phase-shifting of PS is projecting the continuous phase-shifting $\boldsymbol \phi_{n}^{\operatorname{\mathcal C}}$ obatianed by \eqref{equ38} to the nearest discrete value in the set $\mathcal D$, which has been widely employed in existing works \cite{9389801}.
It is computational efficient, i.e., $\mathcal O \left(\mathcal R^2\mathcal N^2 \times 2^b\right)$, and is much smaller than $\mathcal O \left(2^{b\mathcal R\mathcal {N}}\right)$. 

In the following simulations, we investigate the proposed RJD algorithm under the continuous phase-shifting case and the discrete phase-shifting case of 2-bits and 1-bit, i.e., \textit{{RJD, Continuous}}, \textit{{RJD, 2-bits}}, and \textit{{RJD, 1-bit}}. Meanwhile, all the simulation results are obtained by averaging 500 channel realizations. In the rest section, we investigate the convergence behavior and the impacts of the critical simulation parameters on the performance gains achieved by the proposed RJD algorithms as mentioned above, including {\textit{RJD, Continuous}}, {\textit{RJD, 2-bits}}, and {\textit{RJD, 1-bit}}.

\begin{figure}
    \centering
 \includegraphics[width=0.5\linewidth]{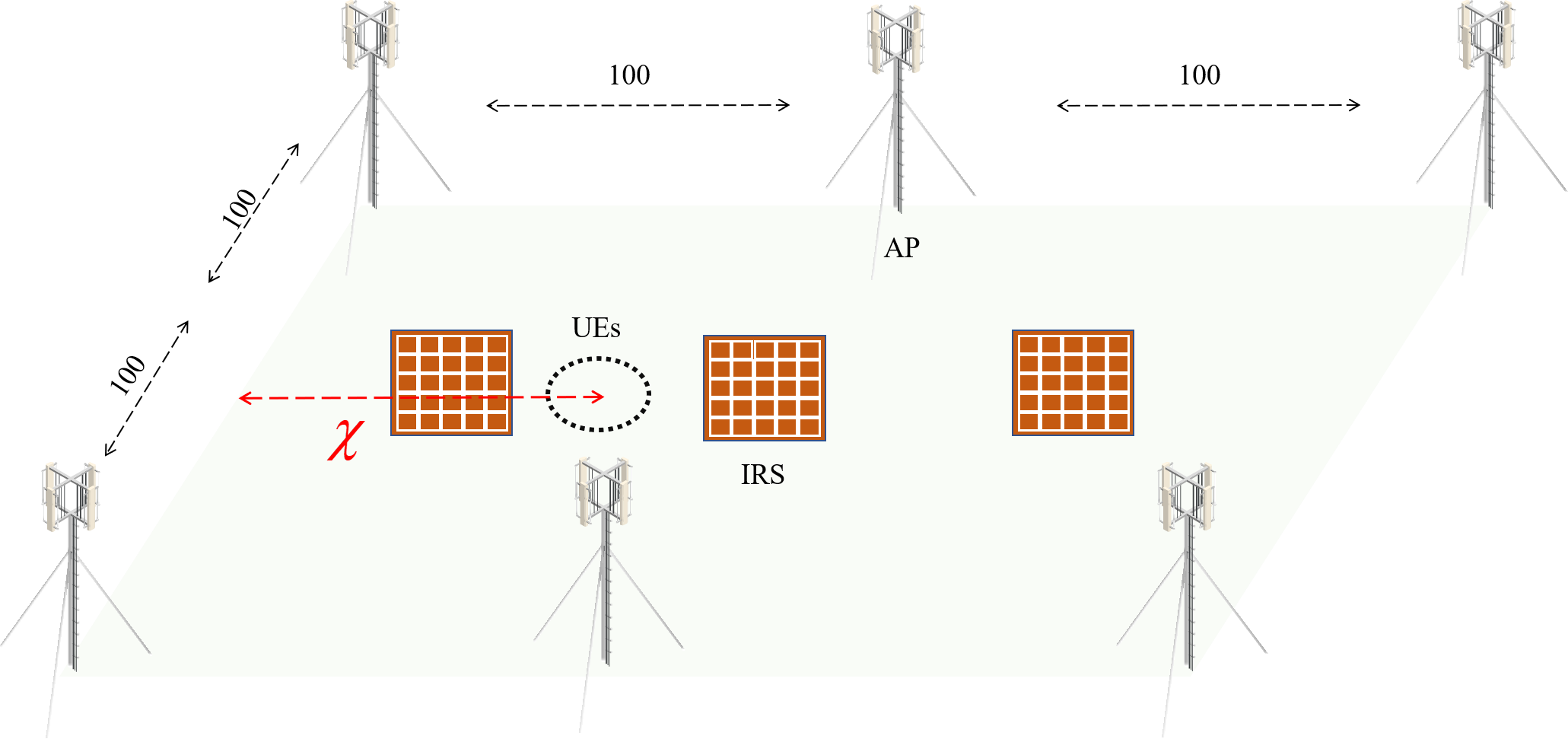}
    \caption{A location schematic of the considered IRSs-assisted cell-free network scenario.}
    \label{f2}
\end{figure}

\subsection{Convergence Behavior}
In this subsection, we investigate the convergence behavior of the proposed RJD algorithms.
As shown in Fig. \ref{fconvergence}, we plot the curves of the average sum-rate achieved by the proposed RJD schemes against the number of iterations under the different number of PSs at each IRS, i.e., $\mathcal N=20$ and $\mathcal N=50$. 
The curves are consistent with our expectation, where the three schemes converge to stationary points after a few iterations. 
It is observed that the convergence speed is sensitive to the number of PSs at each IRS, where the proposed RJD schemes under $\mathcal N=50$ is slower than that under $\mathcal N=20$.
Besides, since the size of the discrete phase-shifting set $2^b$ is much small than $\mathcal R^2\mathcal N^2$, thus it has a limited impact on the convergence speed, and the curves of \textit{{RJD, 2-bits}} and \textit{{RJD, 1-bit}} schemes also prove this analysis.
\begin{figure}
    \centering
    {\includegraphics[width=0.48\linewidth]{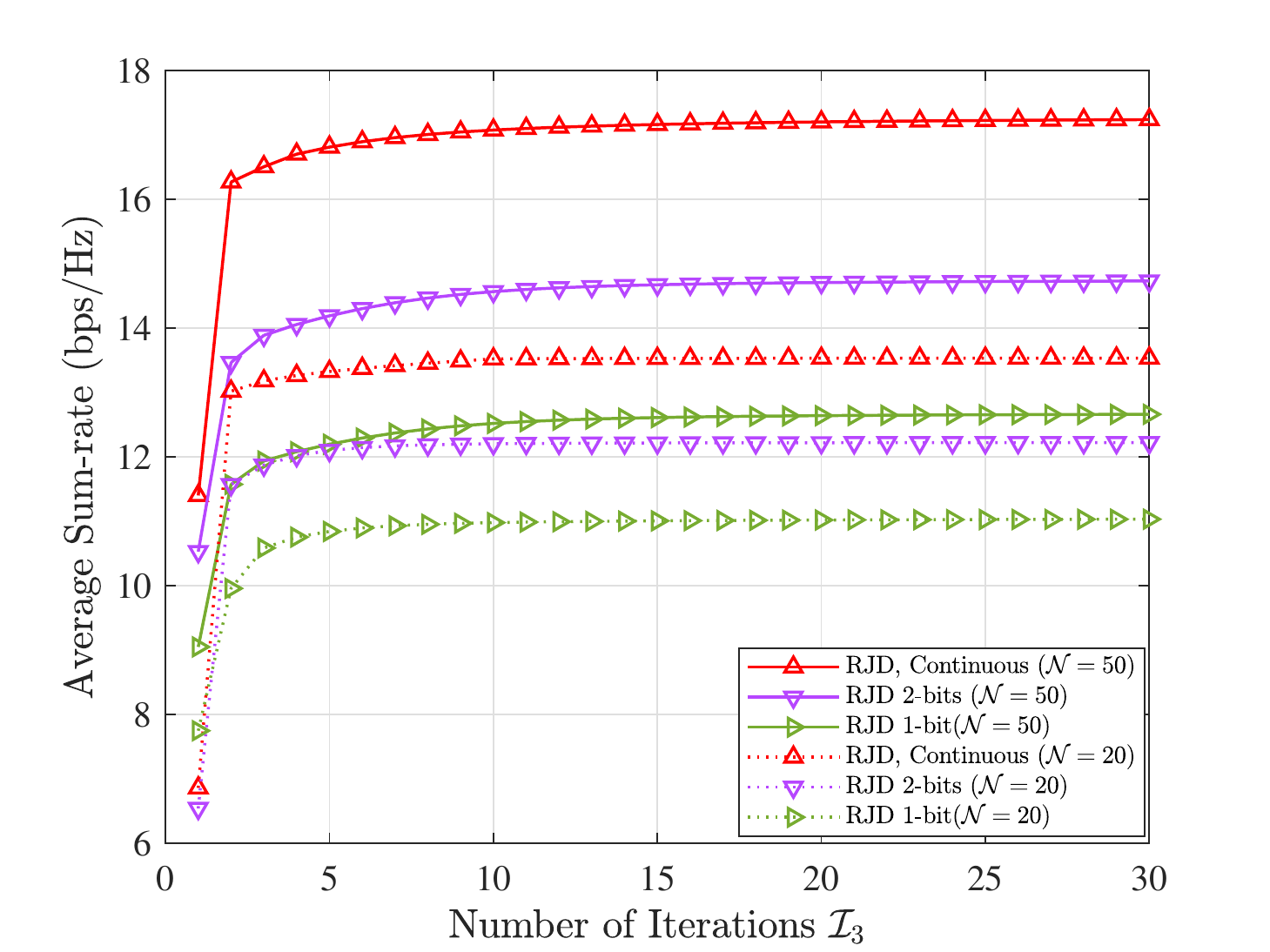}}
    \caption{Average sum-rate against the number of iterations.}
 \label{fconvergence} 
\end{figure}

\subsection{Performance Comparison and Discussions}
To evaluate the performance gain achieved by the IRS, we introduce the conventional cell-free scheme without the aid of IRSs \cite{6515204}, which can be solved by employing Algorithm \ref{a3} after setting the reflecting efficiency of all IRSs as zeros, i.e., $\alpha=0$, and is denoted by {\textit{Conventional CF}}. Additional, {\textit{Upper Bound}} scheme is also considered to denote the perfect CSI case.

\subsubsection{Impact of the CSI estimation error }\label{imcsi}First, we study the impact of the normalized CSI error covariances on the average sum-rate. 
As shown in Fig. \ref{fCSIerrora}, we set the normalized CSI errors of all channels are equal, i.e., $\kappa_{\mathbf D}^2=\kappa_{\mathbf S}^2=\kappa_{\mathbf F}^2=\kappa^2$, and investigate the average sum-rate achieved by all schemes against the normalized CSI error. 
It can be observed that with increasing $\kappa^2$, the performance gaps between all schemes with {\textit{Upper Bound}} become larger, which is consistent with the expectation. Besides, the performance loss of the proposed RJD schemes (include {\textit{RJD, Continuous}}, {\textit{RJD, 2-bits}}, and {\textit{RJD, 1-bit}}) are larger than {\textit{Conventional CF}} with respect to $\kappa^2$, which is due to the channel dimensions in the conventional cell-free network is much smaller than that in the IRS-assisted cell-free network under the considered system setting.

Then, we study the impact of each type of the channel with the different normalized CSI error level on the average sum-rate. As shown in Fig. \ref{fCSIerrorb}, we plot the average sum-rate achieved by the proposed {\textit{RJD, Continuous}} scheme with the different number of PSs.
It can be observed that the performance loss is sensitive to the number of PSs of each IRS. Most importantly, the normalized CSI error of the channel from the AP to the IRS, i.e., $\mathbf S_{l,r}, \forall l, r$ has the largest impact on the average sum-rate due to that the number of antennas at each AP and the number of PSs at each IRS are 
large, and the aggregate CSI uncertainty increase alongside the corresponding channel dimension. Therefore, as the normalized CSI error becomes
larger, the performance gap gradually expands. Fortunately, due to the positions of the APs and the IRSs are fixed, and so the CSI knowledge between the APs and the IRSs can be estimated with satisfactory accuracy by employing the angles of arrival and departure \cite{9366805}. 

\begin{figure}
    \centering
    \subfigure[Equal normalized CSI error.]{
    {\includegraphics[width=0.45\linewidth\label{fCSIerrora}]{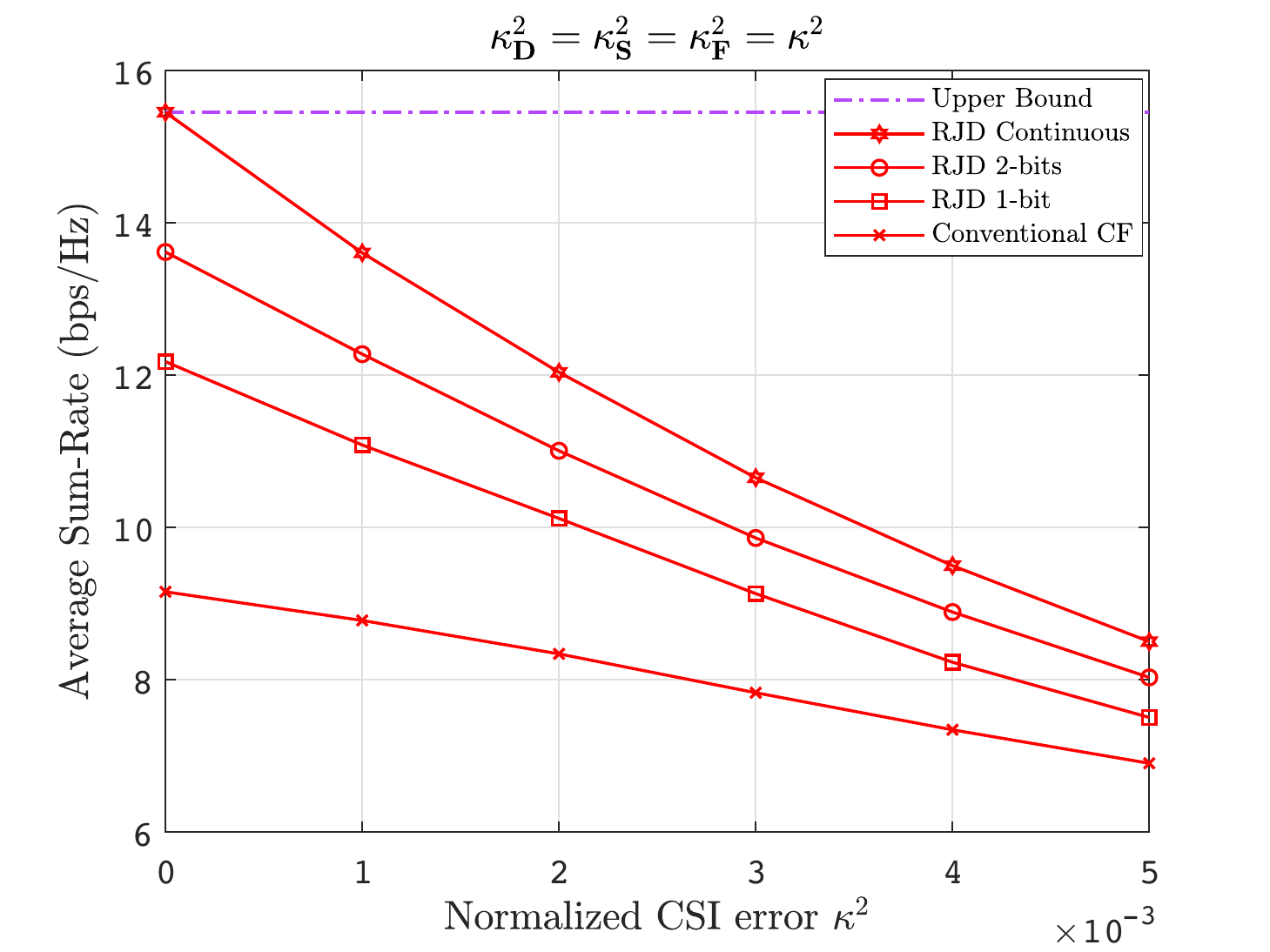}}}
    \subfigure[Different normalized CSI error.]{{\includegraphics[width=0.45\linewidth\label{fCSIerrorb}]{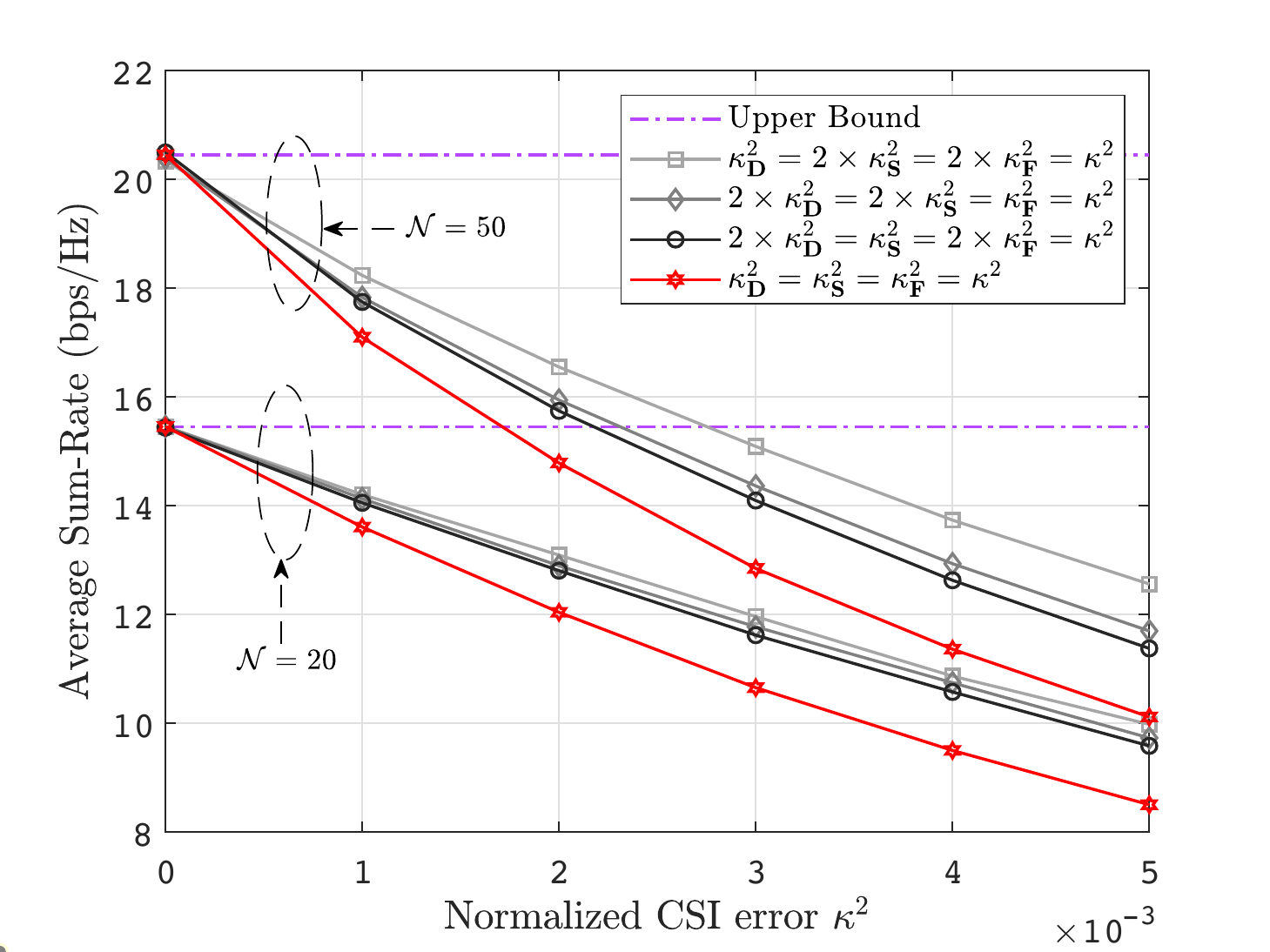}}}
    \caption{Average sum-rate against the normalized CSI error.}
\end{figure}

\subsubsection{Impact of the number of PSs at each IRS}
We present the average sum-rates achieved by all schemes against the number of the PSs at each IRS under a pair of normalized CSI errors in Fig. \ref{fnirs}.
It is illustrates that IRSs can considerably improve the performance compared with the conventional cell-free system, i.e., {\textit{Conventional CF}}, wherewith the increasing number of PSs, the performance achieved by the IRS-related schemes (including {\textit{Upper Bound}}, {\textit{RJD, Continuous}}, {\textit{RJD, 2-bits}}, and {\textit{RJD, 1-bit}}) increase. The trend of the curves highlights the importance of utilizing IRSs.
Besides, the size of the discrete phase-shifting set impacts the performance, and as expected, with increasing $\mathcal N$, the performance gaps between {\textit{RJD, 2-bits}} and {\textit{RJD, 1-bit}} become larger. 
The performance loss compared with {\textit{RJD, Continuous}} can be compensated by adopting high-resolution discrete PSs.
Most importantly, while the overall trends of the curves resemble those of expectation, an interesting observation can be made. Particularly,
when $\mathcal N$ is increased, although the performance gaps between the proposed RJD schemes and {\textit{Upper Bound}} become large, the slopes of the proposed RJD schemes curves towards decrease, which can be attributed to that with the imperfect CSI knowledge, a large number of PSs also means a large CSI uncertainty, thus the relative performance gain also decreased.

\begin{figure}
    \centering
    {\includegraphics[width=0.45\linewidth]{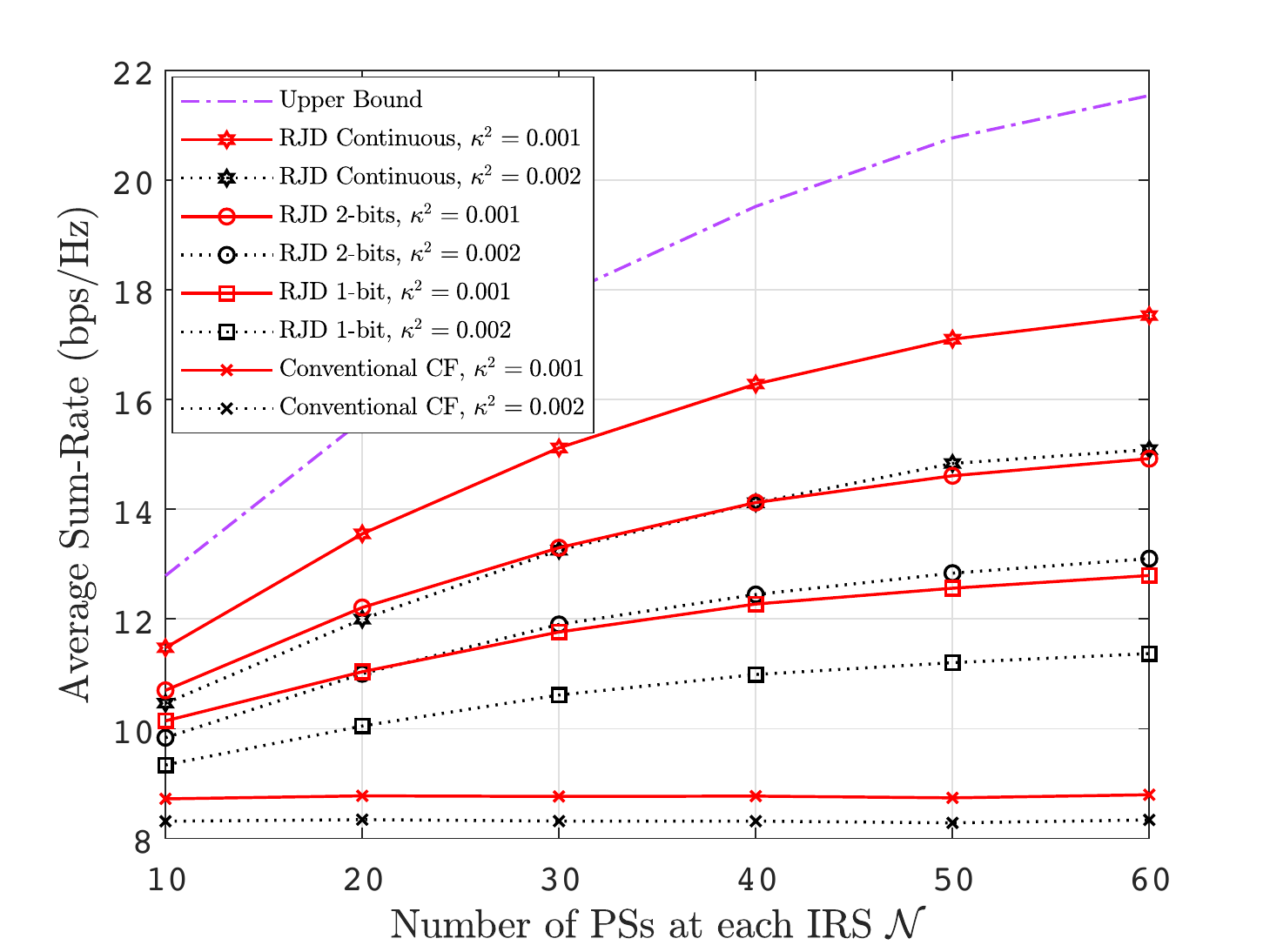}}
    \caption{Average sum-rate against the number of PSs at each IRS.}
 \label{fnirs} 
\end{figure}

\subsubsection{Impact of the UEs location}
We investigate the impact of the UEs location in Fig. \ref{firsloc} while the locations of APs and UEs remain fixed. The curves are approximately symmetric with respect to the line of $\chi=100$m, which are consistent with the expectation. 
It is observed that as the UEs are deployed closer to each IRS, i.e., $\chi=50$m, $\chi=100$m, and $\chi=150$m, the IRS-related schemes achieve the best performances, which is due to the smaller reflection channel fading. This indicates that the system performance can indeed be improved significantly with the deployment of IRSs, especially when the UEs are closed to the IRS.

\begin{figure}
    \centering
    {\includegraphics[width=0.45\linewidth]{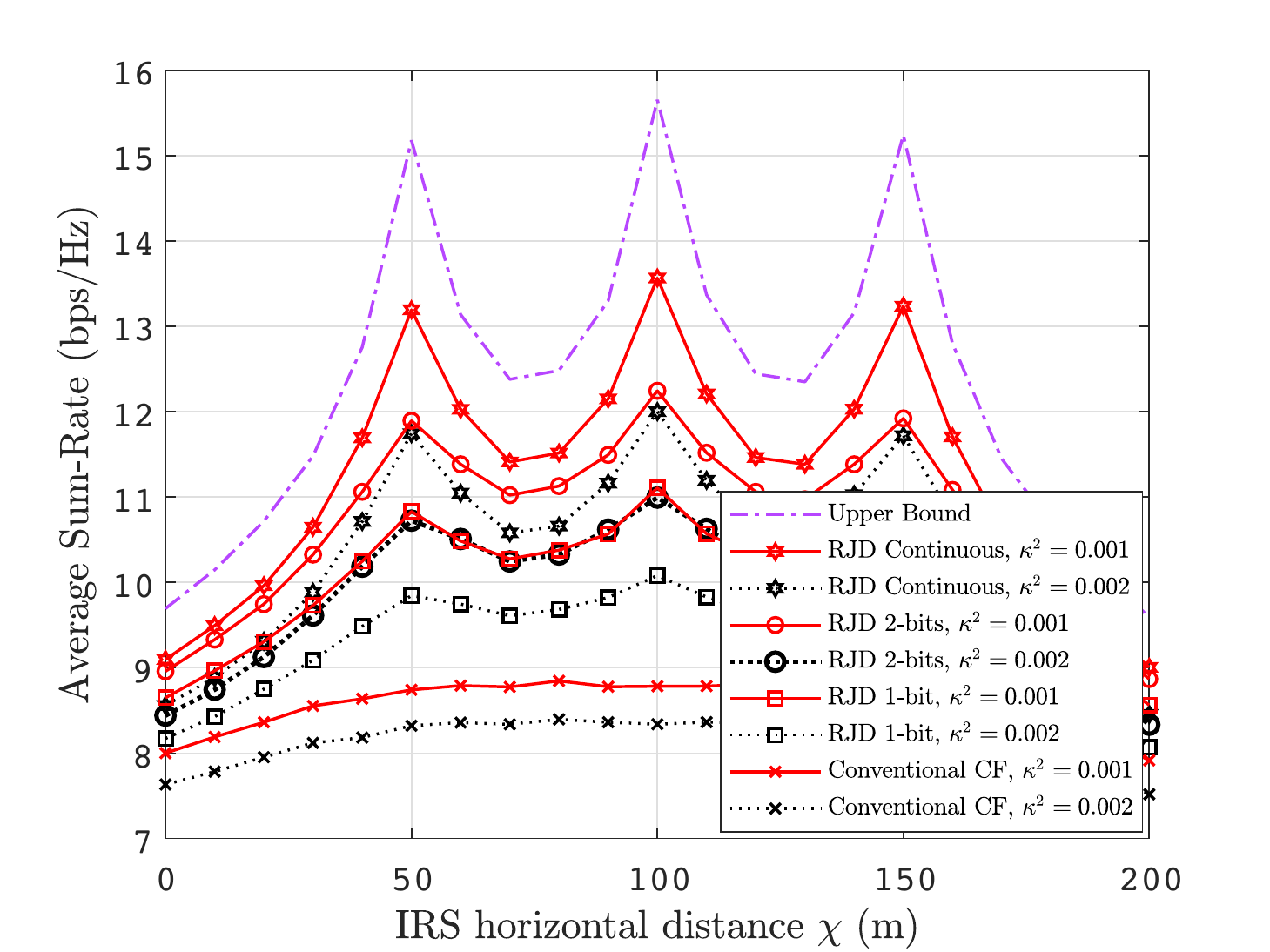}}
    \caption{Average sum-rate against the location of UEs.}
    \label{firsloc}
\end{figure}

\subsubsection{Impact of the path loss exponent}
We investigate the impact of the path loss exponent of the IRS-related channels while fixing that of direct channels. 
As shown in Fig. \ref{fpathloss}, the average sum-rate achieved by all IRS-related schemes decrease significantly with the increasing of the path loss exponent, and finally (i.e., $\rho \ge 3.6$), the curves are approximately coinciding with {\textit{Conventional CF}}. 
This is mainly due to that when the path loss exponents of IRS-related channels are large, the array gains introduced by IRS are negligible. 
To this end, the location of the IRS should be appropriately chosen for ensuring a free space IRS-related channels can be established, which thus to establish reliable communication links between APs and UEs.
\begin{figure}
    \centering
    {\includegraphics[width=0.45\linewidth]{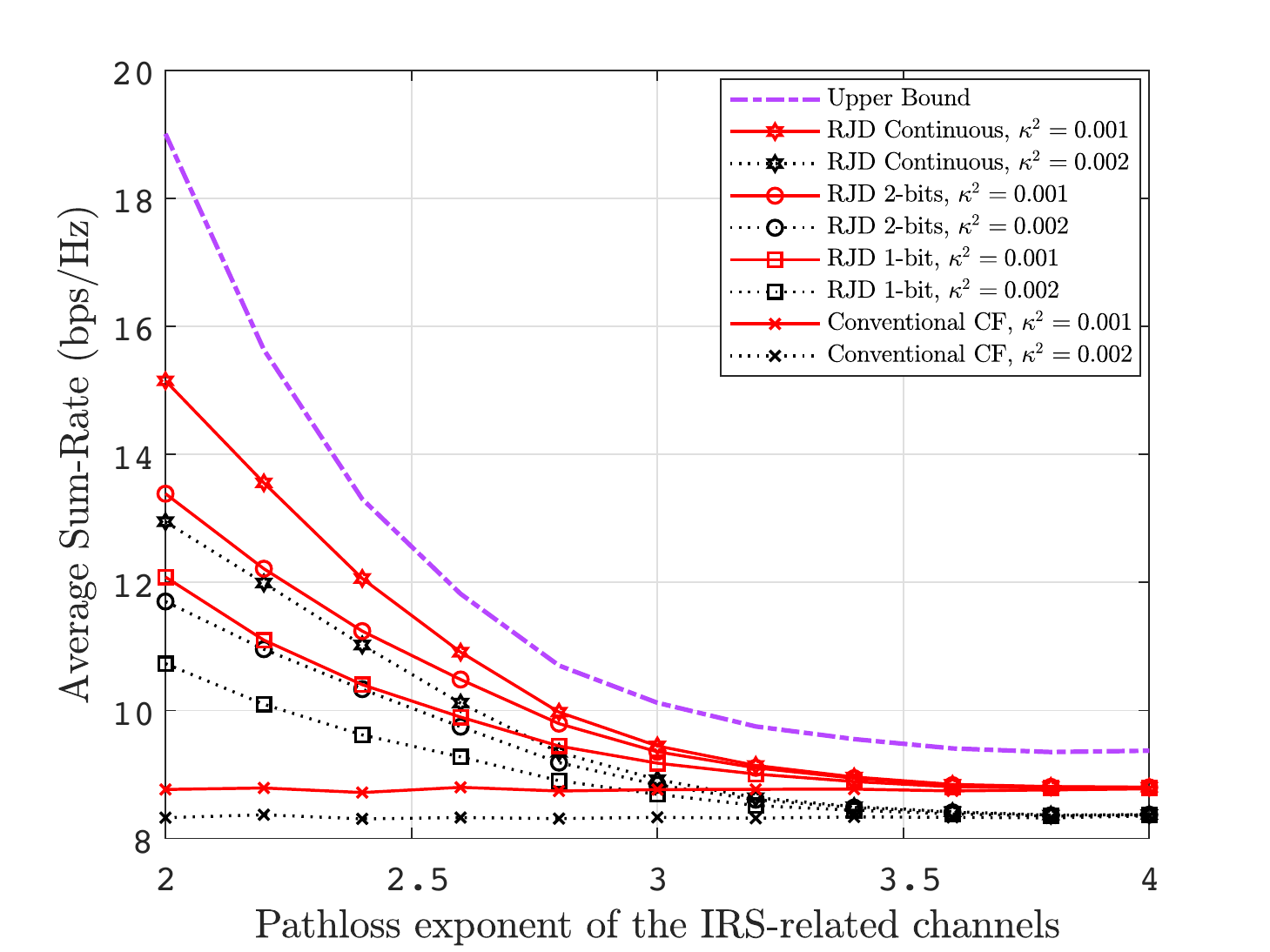}}
    \caption{Average sum-rate against the pathloss exponent of the IRS-related channels.}
 \label{fpathloss} 
\end{figure}

\subsubsection{Impact of the reflecting efficiency}
Fig. \ref{efficiency} illustrates the average sum-rate against the reflecting efficiency of IRSs. The comparative performance patterns are similar to Fig. \ref{fnirs}.
It can be observed that the reflecting efficiency of IRS has a substantial impact on the performance, where as expected, with the increasing $\alpha$, the average sum-rate achieved by IRS-related schemes increased significantly. 
It can be attributed to that a larger $\alpha$ means the fewer power loss caused by signal absorption at IRSs.
\begin{figure}
    \centering
    {\includegraphics[width=0.45\linewidth]{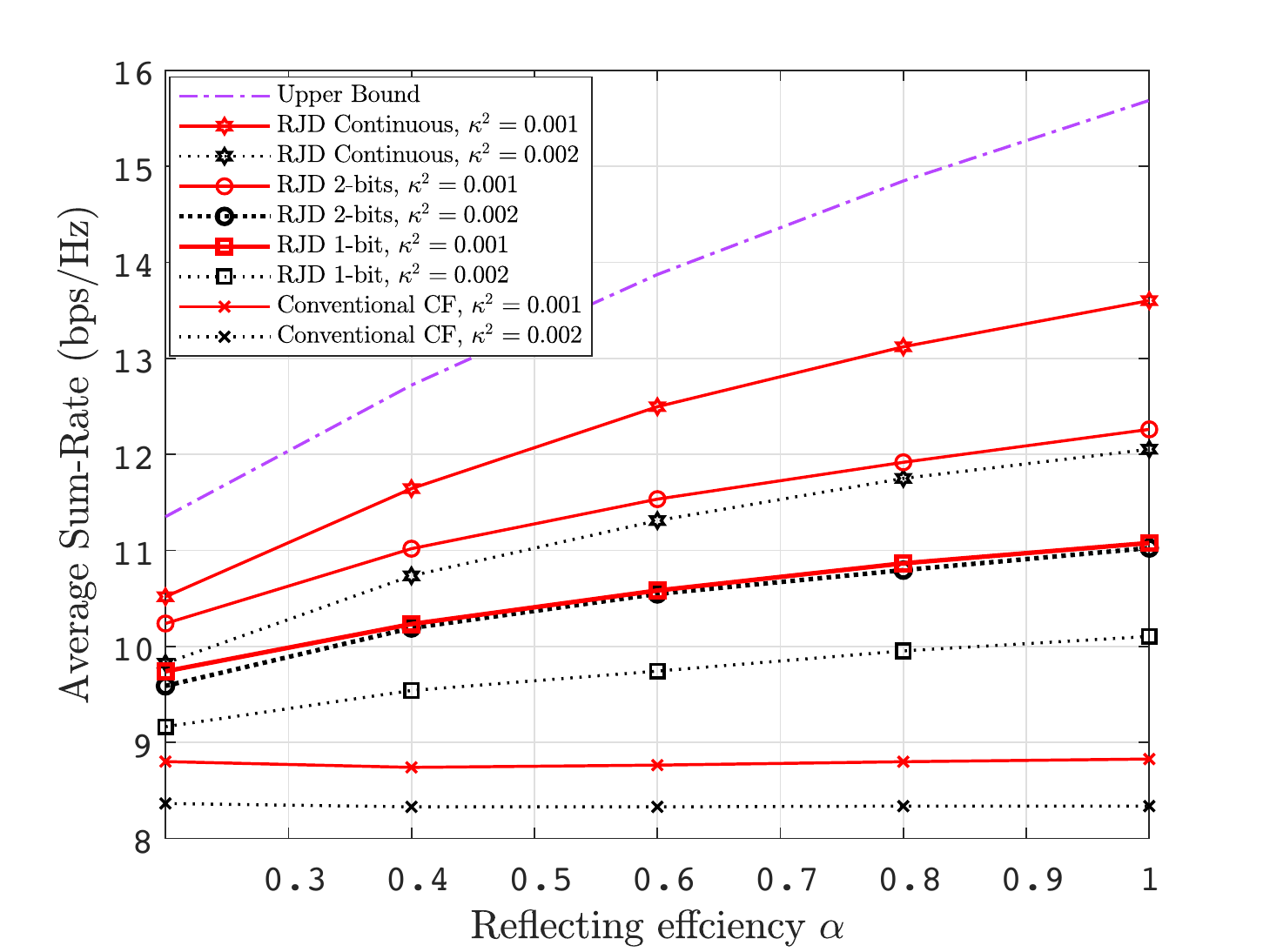}}
    \caption{Average sum-rate against the reflecting efficiency of IRSs.}
 \label{efficiency} 
\end{figure}

\section{Conclusion}{\label{sec5}}
In this paper, we investigated the robust design for IRS-assisted cell-free networks. Particularly, we adopted a stochastic programming method to cope with the CSI uncertainty by maximizing the expectation of the sum-rate, which guaranteed robust performance over the average. Accordingly, we formulated an average sum-rate maximization problem and proposed the efficient BCD-based RJD algorithm to solve it and finally obtain the locally optimal solutions in nearly closed-forms. 
Moreover, we further proved that the CSI uncertainty impacted the optimization of the active transmitting beamforming of APs, but surprisingly does not directly impact the optimization of the passive reflecting beamforming of IRSs.
Simulation results confirmed that deploying IRSs can considerably enhance the data rate compared with conventional cell-free networks. In addition, it was also shown that the proposed BCD-based RJD algorithm had a good convergence behavior and was robust against the CSI uncertainty.

\appendix

\subsection{Proof of Proposition \ref{proepec}}\label{appexp}
By substituting \eqref{equ4} into the expectation term, we have
\begin{align}\label{equ41}
    &\mathbb E\left\{\sum_{k=1}^{\mathcal K}{\operatorname{Tr}\left({\mathbf{\bar U}_k}\mathbf Y_k^{\operatorname{H}}\sum_{l=1}^{\mathcal L}{\sum_{i=1}^{\mathcal K}{{\mathbf {\bar H}_{l,k}^{\operatorname{H}}\mathbf W_{l,i}\mathbf W_{l,i}^{\operatorname{H}}\mathbf {\bar H}_{l,k}}\mathbf Y_{k}}}\right)}\right\}\notag\\
    =&\underbrace{\mathbb E\left\{\sum_{k=1}^{\mathcal K}{\operatorname{Tr}\left({\mathbf{\bar U}_k}\mathbf Y_k^{\operatorname{H}}\sum_{l=1}^{\mathcal L}\sum_{i=1}^{\mathcal K}{\mathbf {\bar D}_{l,k}^{\operatorname{H}}\mathbf W_{l,i}\mathbf W_{l,i}^{\operatorname{H}}\mathbf {\bar D}_{l,k}}\mathbf Y_k\right)}\right\}}_{\left(\operatorname{a}\right)}\notag\\
    &+\underbrace{\mathbb E\left\{\sum_{k=1}^{\mathcal K}{\operatorname{Tr}\left({\mathbf{\bar U}_k}\mathbf Y_k^{\operatorname{H}}\sum_{l=1}^{\mathcal L}\sum_{i=1}^{\mathcal K}{\mathbf {\bar G}_{k}^{\operatorname{H}}\boldsymbol{\Theta}\mathbf {\hat S}_l\mathbf W_{l,i}\mathbf W_{l,i}^{\operatorname{H}}\mathbf {\hat S}_l^{\operatorname{H}}\boldsymbol{\Theta}^{\operatorname{H}}\mathbf {\bar G}_{k}}\mathbf Y_k\right)}\right\}}_{\left(\operatorname{b}\right)}\notag\\
    &+\underbrace{\mathbb E\left\{\sum_{k=1}^{\mathcal K}{\operatorname{Tr}\left({\mathbf{\bar U}_k}\mathbf Y_k^{\operatorname{H}}\sum_{l=1}^{\mathcal L}\sum_{i=1}^{\mathcal K}{\mathbf {\hat G}_k^{\operatorname{H}}\boldsymbol{\Theta}\mathbf {\bar S}_{l}\mathbf W_{l,i}\mathbf W_{l,i}^{\operatorname{H}}\mathbf {\bar S}_{l}^{\operatorname{H}}\boldsymbol{\Theta}^{\operatorname{H}}\mathbf {\hat G}_k}\mathbf Y_k\right)}\right\}}_{\left(\operatorname{c}\right)}\notag\\
    &+\underbrace{\mathbb E\left\{\sum_{k=1}^{\mathcal K}{\operatorname{Tr}\left({\mathbf{\bar U}_k}\mathbf Y_k^{\operatorname{H}}\sum_{l=1}^{\mathcal L}\sum_{i=1}^{\mathcal K}{\mathbf {\bar G}_k^{\operatorname{H}}\boldsymbol{\Theta}\mathbf {\bar S}_{l}\mathbf W_{l,i}\mathbf W_{l,i}^{\operatorname{H}}\mathbf {\bar S}_{l}^{\operatorname{H}}\boldsymbol{\Theta}^{\operatorname{H}}\mathbf {\bar G}_k}\mathbf Y_k\right)}\right\}}_{\left(\operatorname{d}\right)}.
\end{align}
The above equation holds based on the fact that $\mathbb E\left\{\mathbf {\bar S}_{l,r}\right\}=\mathbb E\left\{\mathbf {\bar G}_{r,k}\right\}=\mathbb E\left\{\mathbf {\bar D}_{l,k}\right\}=\boldsymbol 0, \forall l,r,k$. 

In order to complete this proof, we make use of the following properties for any matrices of appropriate dimensions:
\begin{subequations}
\begin{align}
    \operatorname{Tr}\left(\mathbf A^{\operatorname{T}}\mathbf B\right)&=\operatorname{Vec}\left(\mathbf A\right)^{\operatorname{T}}\operatorname{Vec}\left(\mathbf B\right),\label{equpa}\\
    \operatorname{Tr}\left(\mathbf A\otimes\mathbf B\right)&=\operatorname{Tr}\left(\mathbf A\right)\operatorname{Tr}\left(\mathbf B\right),\label{equpb}\\
    \operatorname{Vec}\left(\mathbf A\mathbf B\mathbf C\right)&=\left(\mathbf C^{\operatorname{T}} \otimes \mathbf A\right)\operatorname{Vec}\left(\mathbf B\right),\label{equpc}\\
    \operatorname{Tr}\left(\mathbf A\mathbf M\mathbf B\mathbf M\right)&=\mathbf m^{\operatorname{T}}\left(\mathbf A\odot \mathbf B\right)\mathbf m,\label{equpd}\\
    \mathbf A \odot \mathbf I&=\operatorname{Diag}\left(\mathbf A_{1,1},\mathbf A_{2,2},\cdots,\mathbf A_{J,J}\right),\label{equpe}\\
    \mathbb E\left\{\mathbf x ^{\operatorname{T}}\mathbf A\mathbf x\right\}&=\operatorname{Tr}\left(\mathbf A\boldsymbol \Sigma\right)+\mathbf c ^{\operatorname{T}}\mathbf A\mathbf c,\label{equpf}
\end{align}
\end{subequations}
where $\mathbf M=\operatorname{Diag}\left(\mathbf m\right)$ and $\mathbf m=\operatorname{Vecd}\left(\mathbf M\right)$, and $\mathbf x$ is a stochastic vector with mean $\mathbf c$ and covariance $\boldsymbol \Sigma$. The proofs of above matrix manipulations can be found in \cite{IMM201203274} and \cite{2017Matrix}.

Then, in the following, we analysis the expectation terms $\left(\operatorname{a}\right)$, $\left(\operatorname{b}\right)$, $\left(\operatorname{c}\right)$, $\left(\operatorname{d}\right)$, respectively. First, we can express the expected value of the term $\left(\operatorname{a}\right)$ as 
\begin{subequations}\label{equa}
\begin{align}
   \left(\operatorname{a}\right)=&\mathbb E\left\{\sum_{k=1}^{\mathcal K}{\sum_{l=1}^{\mathcal L}{\sum_{i=1}^{\mathcal K}{\operatorname{Vec}\left(\mathbf {\bar D}_{l,k}\right)^{\operatorname{H}}\operatorname{Vec}\left(\mathbf W_{l,i}\mathbf W_{l,i}^{\operatorname{H}}\mathbf {\bar D}_{l,k}\mathbf Y_k{\mathbf{\bar U}_k}\mathbf Y_k^{\operatorname{H}}\right)}}}\right\}\label{equaa}\\
   =&\mathbb E\left\{\sum_{k=1}^{\mathcal K}{\sum_{l=1}^{\mathcal L}{\sum_{i=1}^{\mathcal K}{\operatorname{Vec}\left(\mathbf {\bar D}_{l,k}\right)^{\operatorname{H}}\left(\left(\mathbf Y_k{\mathbf{\bar U}_k}\mathbf Y_k^{\operatorname{H}}\right)^{\operatorname{T}}\otimes\left(\mathbf W_{l,i}\mathbf W_{l,i}^{\operatorname{H}}\right)\right)\operatorname{Vec}\left(\mathbf {\bar D}_{l,k}\right)}}}\right\}\label{equab}\\
   =&\sum_{k=1}^{\mathcal K}{\sum_{l=1}^{\mathcal L}{\sum_{i=1}^{\mathcal K}{\delta_{\mathbf D_{l,k}}^2\operatorname{Tr}\left(\left(\mathbf Y_k{\mathbf{\bar U}_k}\mathbf Y_k^{\operatorname{H}}\right)^{\operatorname{T}}\otimes\left(\mathbf W_{l,i}\mathbf W_{l,i}^{\operatorname{H}}\right)\right)}}}\label{equac}\\
   =&\sum_{k=1}^{\mathcal K}{\sum_{l=1}^{\mathcal L}{\sum_{i=1}^{\mathcal K}{\delta_{\mathbf D_{l,k}}^2\operatorname{Tr}\left(\mathbf Y_k{\mathbf{\bar U}_k}\mathbf Y_k^{\operatorname{H}}\right)\operatorname{Tr}\left(\mathbf W_{l,i}\mathbf W_{l,i}^{\operatorname{H}}\right)}}},\label{equad}
\end{align}
\end{subequations}
where \eqref{equaa}, \eqref{equab}, \eqref{equac}, and \eqref{equad} can be achieved by employing the properties \eqref{equpa}, \eqref{equpc}, \eqref{equpf}, and \eqref{equpb}, respectively.

Then, we can express the expected value of the term $\left(\operatorname{b}\right)$ as
\begin{subequations}\label{equb}
\begin{align}
    \left(\operatorname{b}\right)
    &=
    \sum_{k=1}^{\mathcal K}{\sum_{l=1}^{\mathcal L}{\sum_{i=1}^{\mathcal K}{\delta_{\mathbf G_{k}}^2\operatorname{Tr}\left(\mathbf Y_k{\mathbf{\bar U}_k}\mathbf Y_k^{\operatorname{H}}\right)\operatorname{Tr}\left(\boldsymbol{\Theta}\mathbf {\hat S}_l\mathbf W_{l,i}\mathbf W_{l,i}^{\operatorname{H}}\mathbf {\hat S}_l^{\operatorname{H}}\boldsymbol{\Theta}^{\operatorname{H}}\right)}}}\label{equba}\\
    &=\sum_{k=1}^{\mathcal K}{\sum_{l=1}^{\mathcal L}{\sum_{i=1}^{\mathcal K}{\alpha^2\delta_{\mathbf G_{k}}^2\operatorname{Tr}\left(\mathbf Y_k{\mathbf{\bar U}_k}\mathbf Y_k^{\operatorname{H}}\right)\operatorname{Tr}\left(\mathbf {\hat S}_l\mathbf W_{l,i}\mathbf W_{l,i}^{\operatorname{H}}\mathbf {\hat S}_l^{\operatorname{H}}\right)}}},\label{equbb}
\end{align}
\end{subequations}
where the proof of \eqref{equba} is the same as \eqref{equa}, and \eqref{equba} holds based on the property of the trace operator, i.e., $\operatorname{Tr}\left(\mathbf A\mathbf B\mathbf C\right)=\operatorname{Tr}\left(\mathbf B\mathbf C\mathbf A\right)=\operatorname{Tr}\left(\mathbf C\mathbf A\mathbf B\right)$ and $\boldsymbol\Theta^{\operatorname{H}}\boldsymbol\Theta=\alpha^2\mathbf I_{\mathcal R\mathcal N}$. Besides, \eqref{equbb} can also be proved by employing \eqref{equpd} and \eqref{equpe} as follows:
\begin{subequations}
\begin{align}
  \operatorname{Tr}\left(\boldsymbol{\Theta}\mathbf {\hat S}_l\mathbf W_{l,i}\mathbf W_{l,i}^{\operatorname{H}}\mathbf {\hat S}_l^{\operatorname{H}}\boldsymbol{\Theta}\right)&=\operatorname{Vecd}\left(\boldsymbol\Theta\right)^{\operatorname{H}}\left(\mathbf I\odot\left(\mathbf {\hat S}_l\mathbf W_{l,i}\mathbf W_{l,i}^{\operatorname{H}}\mathbf {\hat S}_l^{\operatorname{H}}\right)\right)\operatorname{Vecd}\left(\boldsymbol\Theta \right)\label{equbba}\\
  &=\operatorname{Vecd}\left(\boldsymbol\Theta\right)^{\operatorname{H}}\operatorname{Diag}\left(\mathbf {\hat S}_l\mathbf W_{l,i}\mathbf W_{l,i}^{\operatorname{H}}\mathbf {\hat S}_l^{\operatorname{H}}\right)\operatorname{Vecd}\left(\boldsymbol\Theta \right)\label{equbbb}\\
  &=\alpha^2\operatorname{Tr}\left(\mathbf {\hat S}_l\mathbf W_{l,i}\mathbf W_{l,i}^{\operatorname{H}}\mathbf {\hat S}_l^{\operatorname{H}}\right)\label{equbbc},
\end{align}
\end{subequations}
where \eqref{equbbc} holds based on the definition of the trace operator and $\boldsymbol\Theta_{n,n}^{\ast}\boldsymbol\Theta_{n,n}=\alpha^2, \forall n$.

Similarly, the expected value of the term $\left(\operatorname{c}\right)$ can be expressed as
\begin{align}
    \left(\operatorname{c}\right)&=\sum_{k=1}^{\mathcal K}{\sum_{l=1}^{\mathcal L}{\sum_{i=1}^{\mathcal K}{\alpha^2\delta_{\mathbf S_{l}}^2\operatorname{Tr}\left(\mathbf {\hat G}_k\mathbf Y_k{\mathbf{\bar U}_k}\mathbf Y_k^{\operatorname{H}}\mathbf {\hat G}_k^{\operatorname{H}}\right)\operatorname{Tr}\left(\mathbf W_{l,i}\mathbf W_{l,i}^{\operatorname{H}}\right)}}}.\label{equca}
\end{align}

Further, the expected value of the term $\left(\operatorname{d}\right)$ can be determined by
\begin{subequations}\label{eqd}
\begin{align}
        \left(\operatorname{d}\right)&=\sum_{k=1}^{\mathcal K}{\sum_{l=1}^{\mathcal L}{\sum_{i=1}^{\mathcal K}{\alpha^2\delta_{\mathbf G_{k}}^2\operatorname{Tr}\left(\mathbf Y_k{\mathbf{\bar U}_k}\mathbf Y_k^{\operatorname{H}}\right)\operatorname{Tr}\left(\mathbf {\bar S}_{l}\mathbf W_{l,i}\mathbf W_{l,i}^{\operatorname{H}}\mathbf {\bar S}_{l}^{\operatorname{H}}\right)}}}\\
        &=\sum_{k=1}^{\mathcal K}{\sum_{l=1}^{\mathcal L}{\sum_{i=1}^{\mathcal K}{\alpha^2\delta_{\mathbf S_{l}}^2\delta_{\mathbf G_{k}}^2\operatorname{Tr}\left(\mathbf Y_k{\mathbf{\bar U}_k}\mathbf Y_k^{\operatorname{H}}\right)\operatorname{Tr}\left(\mathbf W_{l,i}\mathbf W_{l,i}^{\operatorname{H}}\otimes \mathbf I_{\mathcal R\mathcal N}\right)}}}\label{equcba}\\
        &=\sum_{k=1}^{\mathcal K}{\sum_{l=1}^{\mathcal L}{\sum_{i=1}^{\mathcal K}{\mathcal R\mathcal N\alpha^2\delta_{\mathbf S_{l}}^2\delta_{\mathbf G_{k}}^2\operatorname{Tr}\left(\mathbf Y_k{\mathbf{\bar U}_k}\mathbf Y_k^{\operatorname{H}}\right)\operatorname{Tr}\left(\mathbf W_{l,i}\mathbf W_{l,i}^{\operatorname{H}}\right)}}},\label{equcb}
\end{align}
\end{subequations}
where \eqref{equcba} and \eqref{equcb} hold based on \eqref{equpf} and $\operatorname{Tr}\left(\mathbf I_{\mathcal R\mathcal N}\right)=\mathcal R\mathcal N$, respectively.

By combining \eqref{equa}-\eqref{eqd}, the expectation term in \eqref{equ41} can be simplified as
\begin{align}
&\mathbb E\left\{\sum_{k=1}^{\mathcal K}{\operatorname{Tr}\left({\mathbf{\bar U}_k}\mathbf Y_k^{\operatorname{H}}\sum_{l=1}^{\mathcal L}\sum_{i=1}^{\mathcal K}{\mathbf {\bar H}_{l,k}^{\operatorname{H}}\mathbf W_{l,i}\mathbf W_{l,i}^{\operatorname{H}}\mathbf {\bar H}_{l,k}}\mathbf Y_{k}\right)}\right\}\notag\\
    =&\sum_{k=1}^{\mathcal K}{\sum_{l=1}^{\mathcal L}{\sum_{i=1}^{\mathcal K}{\delta_{\mathbf D_{l,k}}^2\operatorname{Tr}\left(\mathbf Y_k{\mathbf{\bar U}_k}\mathbf Y_k^{\operatorname{H}}\right)\operatorname{Tr}\left(\mathbf W_{l,i}\mathbf W_{l,i}^{\operatorname{H}}\right)}}}\notag\\
    &+\sum_{k=1}^{\mathcal K}{\sum_{l=1}^{\mathcal L}{\sum_{i=1}^{\mathcal K}{\alpha^2\delta_{\mathbf G_{k}}^2\operatorname{Tr}\left(\mathbf Y_k{\mathbf{\bar U}_k}\mathbf Y_k^{\operatorname{H}}\right)\operatorname{Tr}\left(\mathbf {\hat S}_l\mathbf W_{l,i}\mathbf W_{l,i}^{\operatorname{H}}\mathbf {\hat S}_l^{\operatorname{H}}\right)}}}\notag\\
    &+\sum_{k=1}^{\mathcal K}{\sum_{l=1}^{\mathcal L}{\sum_{i=1}^{\mathcal K}{\alpha^2\delta_{\mathbf S_{l}}^2\operatorname{Tr}\left(\mathbf {\hat G}_k\mathbf Y_k{\mathbf{\bar U}_k}\mathbf Y_k^{\operatorname{H}}\mathbf {\hat G}_k^{\operatorname{H}}\right)\operatorname{Tr}\left(\mathbf W_{l,i}\mathbf W_{l,i}^{\operatorname{H}}\right)}}}\notag\\
    &+\sum_{k=1}^{\mathcal K}{\sum_{l=1}^{\mathcal L}{\sum_{i=1}^{\mathcal K}{\mathcal R\mathcal N\alpha^2\delta_{\mathbf S_{l}}^2\delta_{\mathbf G_{k}}^2\operatorname{Tr}\left(\mathbf Y_k{\mathbf{\bar U}_k}\mathbf Y_k^{\operatorname{H}}\right)\operatorname{Tr}\left(\mathbf W_{l,i}\mathbf W_{l,i}^{\operatorname{H}}\right)}}}.
\end{align}

This completes the proof of Proposition \ref{proepec}.

\subsection{Proof of Lemma \ref{lemmaconver}}\label{lemma2proof}
Let $\rho _i^{\operatorname{u+1}}$ denote the value of the objective function of $\rho ^{\operatorname{u+1}}$ after updating the $i$-th element and fixing the others $\mathcal {\widetilde N} -1$ elements of $\boldsymbol\theta$ in the $u$-th sub-iteration, we have
\begin{align}\label{eqlem}
    \rho^{\operatorname{u}}\le\rho _1^{\operatorname{u+1}}\le\rho _2^{\operatorname{u+1}}\cdots\le\rho _{\mathcal {\widetilde N}}^{\operatorname{u+1}}=\rho ^{\operatorname{u+1}},
\end{align}
which shows that the value of the objective function in problem $\mathcal P5$ achieved by Algorithm \ref{a2} increases monotonically. Meanwhile, we have
\begin{subequations}
\begin{align}
    -\boldsymbol\theta^{\operatorname{H}}\boldsymbol{\mathcal { Z}}\boldsymbol\theta \le -\alpha^2\mathcal{\widetilde N}\lambda_{\boldsymbol{\mathcal { Z}}}^{\min},\\
\operatorname{Re}\left\{\boldsymbol\theta^{\operatorname{H}}\boldsymbol\omega\right\} \le \alpha\sum_{n=1}^{\mathcal{\widetilde N}}\left|\boldsymbol\omega_n\right|,
\end{align}
\end{subequations}
where $\lambda_{\boldsymbol{\mathcal { Z}}}^{\min}$ is the minimum eigenvalue of $\boldsymbol{\mathcal { Z}}$. The above inequalities yield that the optimal objective value is upper-bounded by a finite value. Therefore, Algorithm \ref{a2} is guaranteed  to converge.
This completes the proof of Lemma \ref{lemmaconver}.


\ifCLASSOPTIONcaptionsoff
  \newpage
\fi

\bibliography{ref}
\end{document}